\newtheorem{theorem}{Theorem}
\newtheorem{proposition}{Proposition}
\newtheorem{remark}{Remark}
\newtheorem{lemma}{Lemma}
\newtheorem{corollary}{Corollary}
\DeclareMathOperator{\Hess}{Hess}
\DeclareMathOperator{\EHF}{{\mathcal E}^\text{HF}}
\DeclareMathOperator{\E}{{\mathcal E}}
\DeclareMathOperator{\F}{{\mathcal F}}
\DeclareMathOperator{\Span}{{Span}}
\newcommand{\R}{\mathbb{R}}
\newcommand{\C}{\mathbb{C}}
\DeclareMathOperator{\Tr}{Tr}
\DeclareMathOperator{\tr}{tr}
\DeclareMathOperator{\Gram}{Gram}
\newcommand{\mat}[1]{\begin{pmatrix}#1\end{pmatrix}}
\newcommand{\lela}{\left \langle}
\newcommand{\rira}{\right \rangle}
\newcommand{\lp}{\left(}
\newcommand{\rp}{\right)}
\newcommand{\norm}[1]{\left\|#1\right\|}
\newcommand{\abs}[1]{\left|#1\right|}
\newcommand{\ie}{\textit{i.e.\xspace}}
\newcommand{\eps}{\varepsilon}
\date{\today}
\address{Université Paris-Dauphine, CEREMADE, Place du Maréchal Lattre
  de Tassigny, 75775 Paris Cedex 16, France.}
\email{levitt@ceremade.dauphine.fr}
\thanks{Support from the grant
  ANR-10-BLAN-0101 of the French Ministry of Research is gratefully
  acknowledged}
\subjclass[2010]{Primary: 35Q40; Secondary: 49S05, 81V55}
\keywords{multiconfiguration, relativistic quantum mechanics,
  variational methods}
\begin{document}
\title{Solutions of the multiconfiguration Dirac-Fock equations}
\begin{abstract}
  The multiconfiguration Dirac-Fock (MCDF) model uses a linear
  combination of Slater determinants to approximate the electronic
  $N$-body wave function of a relativistic molecular system, resulting
  in a coupled system of nonlinear eigenvalue equations, the MCDF
  equations. In this paper, we prove the existence of solutions of
  these equations in the weakly relativistic regime. First, using a
  new variational principle as well as results of Lewin on the
  multiconfiguration nonrelativistic model, and Esteban and Séré on
  the single-configuration relativistic model, we prove the existence
  of critical points for the associated energy functional, under the
  constraint that the occupation numbers are not too small. Then, this
  constraint can be removed in the weakly relativistic regime, and we
  obtain non-constrained critical points, i.e. solutions of the
  multiconfiguration Dirac-Fock equations.
\end{abstract}
\maketitle
\setcounter{tocdepth}{2}
\tableofcontents

\section{Introduction}
Consider an atom or molecule with $N$ electrons. Nonrelativistic
quantum mechanics dictates that, under the Born-Oppenheimer
approximation, the electronic rest energy is given by the lowest
fermionic eigenvalue of the $N$-body Hamiltonian. The complexity of
this problem grows exponentially with $N$, and approximations are used
to keep the problem tractable. Hartree-Fock theory uses the
variational ansatz that the $N$-body wavefunction is a single Slater
determinant. The optimization of the resulting energy over the
orbitals gives rise to a nonlinear eigenvalue problem, which is solved
iteratively.

It is well-known that this method overestimates the true ground state
energy by a quantity known as the correlation energy, whose size can
be significant in many cases of chemical interest
\cite{szabo1989modern}. This can be remedied by considering several
Slater determinants, a technique known as multiconfiguration
Hartree-Fock (MCHF) theory. This brings the model closer to the full
$N$-body problem, and, in the limit of an infinite number of
determinants, one recovers the true ground state energy.

Another source of errors is that the Hamiltonian used is
non-relativistic. Indeed, in large atoms, the core electrons reach
relativistic speeds (in atomic units, of the order of $Z$, compared
with the speed of light $c \approx 137$). This causes a length
contraction which affects the screening by the core electrons of the
attractive potential of the nucleus. This has important consequences
for the valence electrons and the chemistry of elements. Neglecting
these effects leads to incorrect conclusions, and for instance fails
to account for the difference in color between silver and gold
\cite{pyykko1979relativity}.

For a fully relativistic treatment of the electrons, one should use
quantum electrodynamics (QED). But this very precise theory is also
extremely complex for all but the simplest systems. Therefore,
physicists and chemists use approximate Hamiltonians to avoid working
in the full Fock space of QED. The multiconfiguration Dirac-Fock
(MCDF) model is obtained by using the Dirac operator in the
multiconfiguration Hartree-Fock model. It incorporates relativistic
effects into the multiconfiguration Hartree-Fock model, and has been
used successfully in a number of applications
\cite{dyall2007introduction, grant2007relativistic}.

Although these models, and more complicated ones, are used routinely
by physicists, many problems still remain in their mathematical
analysis. The first rigorous proof of existence of ground states of
the Hartree-Fock equations was given by Lieb and Simon
\cite{lieb1977} and later generalized to excited states by Lions
\cite{lions1987}. The multiconfiguration equations were studied by Le
Bris \cite{lebrismulticonf}, who proved existence in the particular
case of doubly excited states. Friesecke later proved the existence of
minimizers for an arbitrary number of determinants
\cite{friesecke2003multiconfiguration}, and Lewin generalized his
proof to excited states, in the spirit of the method of Lions
\cite{lewin2004}. For relativistic models, Esteban and Séré proved
existence of single-configuration solutions to the Dirac-Fock
equations \cite{esteban-sere-1999}, and studied their non-relativistic
limit \cite{esteban2001}. To our knowledge, the present work is the
first mathematical study of a relativistic multiconfiguration model.

The main mathematical difficulty of the multiconfiguration equations,
apart from the increased algebraic complexity, is that one cannot
simultaneously diagonalize the Fock operator and the matrix of
Lagrange multipliers. Lewin rewrote the Euler-Lagrange equations in a
vector formalism and used the same arguments as in the Hartree-Fock
case \cite{lieb1977, lions1987} to prove the existence of solutions.

The Dirac-Fock equations are considerably more difficult to handle
than the Hartree-Fock equations. The main difficulty is that the Dirac
operator is not bounded from below. This fact, which causes important
problems already in the linear theory, complicates the search for
solutions of the equations, because every critical point has an
infinite Morse index. One can therefore no longer minimize the energy
functional, or even use standard critical point theory. Esteban and
Séré \cite{esteban-sere-1999}, later generalized by Buffoni, Esteban
and Séré \cite{estebanserebuffoni}, used the concavity of the energy
with respect to the negative directions of the free Dirac operator to
reduce the problem to one whose critical points have a finite Morse
index.

The MCDF model combines the two mathematical problems and adds the
difficulty that, for the theory to make sense, the speed of light has
to be above a constant that depends on a lower bound on the occupation
numbers. Note that this difficulty with small occupation numbers is
also encountered in numerical computations
\cite{indelicato2007projection}, and theoretical studies of the
nonrelativistic evolution problem \cite{bardos2010setting}.

In this paper, we prove the existence of solutions, when the
speed of light is large enough (weakly relativistic regime). We now
describe our formalism.

\section{Definitions}

In atomic units, the Dirac operator is given by
\begin{align}
  D_{c} = -i c (\alpha \cdot \nabla) + c^{2}\beta.
\end{align}
In standard representation, $\alpha$
and $\beta$ are $4 \times 4$ matrices given by
\begin{align*}
  \alpha_{k} = \mat{0&\sigma_{k}\\\sigma_{k}&0}, \beta_{k} = \mat{I_{2}&0\\0&-I_{2}},
\end{align*}
where the $\sigma_{k}$ are the Pauli matrices
\begin{align*}
  \sigma_{1} = \mat{0&1\\1&0}, \sigma_{2} = \mat{0&-i\\i&0},
  \sigma_{3} = \mat{1&0\\0&-1}.
\end{align*}
The speed of light $c$ has the physical value $c \approx 137$.

The operator $D_{c}$ is self-adjoint on $L^{2}(\R^{3},\C^{4})$ with
domain $H^{1}(\R^{3},\C^{4})$ and form domain
$H^{1/2}(\R^{3},\C^{4})$. It verifies the relativistic identity
$D_{c}^{2} = c^{4} - c^{2} \Delta$. More precisely, it admits the
spectral decomposition
\begin{align}
  D_{c} = P^{+} \sqrt{c^{4} - c^{2}\Delta} \, P^{+}-
  P^{-} \sqrt{c^{4} - c^{2}\Delta} \,P^{-},
\end{align}
where the projectors $P^{\pm}$ are given in the Fourier domain by
\begin{align}
  P^{\pm}(\xi) = \frac 1 2 \lp 1_{\C^{4}} \pm
  \frac{c\alpha\cdot\xi + c^{2} \beta}{\sqrt{c^{4} + c^{2} \xi^{2}}}\rp.
\end{align}

We denote by
\begin{align}
  E = H^{1/2}(\R^{3},\C^{4})
\end{align}
the form-domain of $D_{c}$, and
\begin{align}
 E^{\pm} = P^{\pm} E  \label{def_Epm}
\end{align}
the two
positive and negative spectral subspaces.

We will use three scalar products in this paper:
\begin{align*}
  &\lela \psi, \phi\rira_{L^{2}} = \int_{\R^{3}} \psi^{*} \phi,\\
  &\lela \psi, \phi\rira_{E} = \lela \psi, \sqrt{1-\Delta}\phi\rira_{L^{2}},\\
  &\lela \psi, \phi\rira_{c} = \lela \psi, \sqrt{1-\frac {\Delta}{c^{2}}}\phi\rira_{L^{2}},
\end{align*}
with associated norms $\norm{\psi}_{L^{2}}, \norm{\psi}_{E},
\norm{\psi}_{c}$. The purpose of this last norm is to simplify several
estimates. It is related to the change of variables $d_{c}(\psi)(x)
= c^{-3/2}\psi(\frac x c)$ used in \cite{esteban2001} in the sense that
\begin{align*}
  \lela \Psi, \Phi\rira_{c} = \lela d_{c}(\Psi), d_{c}(\Phi)\rira_{E}
\end{align*}

A molecule made of $M$ nuclei with positions $z_{i}$ and charges
$Z_{i}$ creates an attractive potential
\begin{align*}
  V(x) = - \sum_{i=1}^{M}  \frac{Z_{i}}{\abs{x-z_{i}}}.
\end{align*}
More generally, we consider a charge distribution $\mu \geq 0$ with
$\mu(\R^{3}) = Z$, which creates a potential
\begin{align}
  V= -\mu\star \frac 1 {\abs x}.
\end{align}

In the sequel, we shall always assume that $N < Z+1$, which is the
only case where we can prove existence of solutions to our
equations. This assumption is made in existence proofs for the
Hartree-Fock model to ensure that an electron cannot ``escape to
infinity'', because it will then feel the effective attractive
potential $\frac{(N-1) - Z}{\abs x}$ \cite{lions1987,
  lieb1977}. Mathematically, it is used to prove that second order
information on Palais-Smale sequences implies that the Lagrange
multipliers are not in the essential spectrum.

The Hamiltonian $D_{c} + V$ has a spectral gap around zero as long as
\begin{align*}
  Z <  \frac 2 {\pi/2+2/\pi} c.
\end{align*}

This is related to the following Hardy-type inequality (see
\cite{tix1998strict,herbst1977spectral,kato1995perturbation}) :
\begin{align}
  \abs{\lela \psi, V \psi\rira} \leq \frac Z 2 (\pi/2+2/\pi) \lela \psi,
  \sqrt{1-\Delta}\psi\rira
\end{align}
for all $\psi \in E^{\pm}$, a refinement of the Kato inequality
\begin{align}
  \label{katoineq}
  \abs{\lela \psi, V \psi\rira} &\leq \frac {Z\pi} 2 \lela \psi,
  \sqrt{-\Delta}\psi\rira
\end{align}
for all $\psi \in E$, which we will use extensively in this paper. We
also recall the standard Hardy inequality:
\begin{align}
  \norm{V \phi}_{L^{2}} \leq 2Z \norm{\nabla \phi}_{L^{2}}
\end{align}
for all $\phi \in H^{1}$.

The $N$-body relativistic Hamiltonian is given by
\begin{align*}
  H^{N} = \sum_{i=1}^{N} (D_{c,x_{i}} + V(x_{i})) + \sum_{1 \leq i < j
    \leq N} \frac 1 {\abs{x_{i} - x_{j}}}.
\end{align*}

This Hamiltonian acts on $\bigwedge^{N} L_{a}^{2}(\R^{3},\C^{4})$, the
fermionic $N$-body space. Its interpretation is problematic, and no
self-adjoint formulation is known \cite{derezinski}.

For a given $K \geq N$, the multiconfiguration ansatz is
\begin{align}
  \label{multidf_ansatz}
  \psi = \sum_{1 \leq i_{1} < \dots < i_{N} \leq K} a_{i_{1},\dots,i_{N}} \; \left|\psi_{i_{1}}\, \dots\, \psi_{i_{N}}\right\rangle,
\end{align}
where
\begin{align*}
  \left|\psi_{i_{1}}\, \dots\, \psi_{i_{N}}\right\rangle(X_{1},\dots,X_{N}) =
  \frac{1}{\sqrt{N!}} \det(\psi_{i_{k}}(X_{l}))_{k,l}
\end{align*}
with $X_{l} = (x_{l},s_{l}), x_{l} \in \R^{3}, s_{l} \in \{1,2,3,4\}$
are Slater determinants, and $a \in S, \Psi \in \Sigma,$ where
\begin{align}
  S &= \{a \in \C^{K \choose N}, \norm{a}^{2} = \sum_{1 \leq i_{1} < \dots < i_{N} \leq K} \abs{a_{i_{1,\dots,i_{N}}}}^{2}= 1\},\\
  \Sigma &=  \{\Psi \in E^{K}, \Gram \Psi = 1\},\\
  &=  \{\Psi \in E^{K}, \lela \psi_{i}, \psi_{j}\rira_{L^{2}} = \delta_{ij}\}.\notag
\end{align}

Our convention here and in the rest of this paper is to use lower
case greek letters for orbitals $\psi \in E$, and upper case greek letters
for vectors of orbitals $\Psi \in E^{K}$. We extend in a straightforward
way the scalar products
$\lela\cdot,\cdot\rira_{L^{2}},\lela\cdot,\cdot\rira_{E}$ and
$\lela\cdot,\cdot\rira_{c}$ to the space $E^{K}$:
\begin{align*}
  \lela \Psi, \Phi\rira_{*} = \sum_{k=1}^{K} \lela \psi_{k}, \phi_{k}\rira_{*}.
\end{align*}

Following \cite{lewin2004}, we define
\begin{align*}
  \alpha_{i_{1}\dots i_{N}} =
  \begin{cases}
    0 & \text{if } \#(i_{1}\dots i_{N}) < N,\\
    \frac{\epsilon(\sigma)}{\sqrt{N!}}
    a_{i_{\sigma(1)},\dots,i_{\sigma(N)}}&\text{ otherwise},
  \end{cases}
\end{align*}
where, for all $i_{1},\dots,i_{N}$ with $\#(i_{1}\dots i_{N}) = N$,
$\sigma$ is the unique permutation such that $i_{\sigma(1)} < \dots <
i_{\sigma(N)}$.

With this definition,
\begin{align*}
  \psi(X_{1},\dots,X_{N}) = \sum_{1\leq i_{1} \leq N,\, \dots,\,1\leq i_{N} \leq N,} \alpha_{i_{1},\dots,i_{N}} \psi_{i_{1}}(X_{1})\dots\psi_{i_{N}}(X_{N}).
\end{align*}

Then, substituting into the relativistic energy $\lela \psi, H^{N}
\psi\rira$, we obtain \cite{lewin2004}
\begin{align}
  \E(a,\Psi) &= \lela \Psi, \lp D_{c}\Gamma_{a} + V \Gamma_{a} +
  W_{a,\Psi}\rp \Psi\rira_{(L^{2}(\R^{3},\C^{4}))^{K}},
\end{align}
with the $K \times K$ Hermitian matrices
\begin{align*}
  (\Gamma_{a})_{i,j} &= N \sum_{k_{2}\dots k_{N}} \alpha_{i,k_{2}\dots k_{N}}^{*} \alpha_{j,k_{2}\dots k_{N}},\\
  (W_{a,\Psi})_{i,j} &= \frac{N(N-1)}2 \sum_{k_{3}\dots k_{N}}
  \sum_{k,l} \alpha^{*}_{i,k,k_{3}\dots k_{N}} \alpha_{j,l,k_{3}\dots
    k_{N}}\lp \psi^{*}_{k}\psi_{l} \star \frac 1 {\abs x}\rp.
\end{align*}

The eigenvalues $\gamma_{i}$ of $\Gamma_{a}$, for $a \in S$, satisfy
$0 \leq \gamma_{i} \leq 1$, $\sum_{i=1}^{K} \gamma_{i} = N$. They are
called occupation numbers, and measure the total weight of the
corresponding orbital in the $N$-body wave function.

In the rest of this paper, we will write matrix inequalities in the
sense of Hermitian matrices. We also extend the notation $\liminf$ and
$\limsup$ to matrix inequalities, in an abuse of notation. For
instance, we will take $\liminf \Gamma_{n} \geq \Gamma$ to mean that
the smallest eigenvalue $\lambda_n^{1}$ of $\Gamma_{n} - \Gamma$
satisfies $\liminf \lambda_{n}^{1} \geq 0$.

For reference, we define the multiconfiguration Hartree-Fock
energy
\begin{align}
  \EHF(a,\Phi) = \lela \Phi, \lp -\frac 1 2 \Delta\, \Gamma_{a} + V \Gamma_{a} +
  W_{a,\Phi}\rp \Phi\rira_{(L^{2}(\R^{3},\C^{2}))^{K}},
\end{align}
on $S \times \{\Phi \in (H^{1}(\R^{3},\C^{2}))^{K}, \Gram \Phi
= 1\}$.

One can define a group action on $S \times \Sigma$ that leaves $\E$
and $\EHF$ invariant : for any unitary matrix $U \in \mathcal U(K)$,
\begin{align}
  \label{groupaction}
  U \cdot (a,\Psi) = (a',U \Psi),
\end{align}
where $a'$ is defined via the equivalent variables $\alpha'$ :
\begin{align*}
  \alpha'_{i_{1},\dots,i_{N}} = \sum_{j_{1},\dots,j_{N}}
  (U^{*})_{i_{1},j_{1}}\dots (U^{*})_{i_{N},j_{N}} \alpha_{j_{1},\dots,j_{N}},
\end{align*}
where $U^{*}$ is the adjoint of $U$. This group action is the
multiconfiguration analogue of the well-known unitary invariance of
the Hartree-Fock equations.

The MCDF equations, obtained as the Euler-Lagrange equations of $\E$
under the constraints $a \in S$ and $\Psi \in \Sigma$, are, for $\Psi$
and $a$ respectively,
\begin{align}
  \label{DF-1}
  H_{a,\Psi} \Psi &= \Lambda \Psi,\\
  \label{DF-2}
  \mathcal H_{\Psi} a &= E a,
\end{align}
where
\begin{align}
  H_{a,\Psi} &= D_c \Gamma_{a} + V \Gamma_{a} + 2 W_{a,\Psi}
\end{align}
is the Fock operator, and
\begin{align}
  (\mathcal H_{\Psi})_{I,J} &= \lela \psi_{i_{1}}\, \dots\,
  \psi_{i_{N}}\, \big| H^{N}\big|\, \psi_{j_{1}}\, \dots\, \psi_{j_{N}}\rira
\end{align}
are the coefficients of the ${K \choose N} \times {K \choose N}$
matrix of the $N$-body Hamiltonian $H^{N}$ in the basis of the Slater
determinants. Our goal in this paper is to prove the existence of
solutions to \eqref{DF-1} and \eqref{DF-2} by finding critical points
of $\E$ on $S \times \Sigma$.

\section{Strategy of proof}
There are several major mathematical difficulties in the study of the
MCDF model that are not present in the single-configuration case. One
can use the group action \eqref{groupaction} to diagonalize $\Gamma$
or $\Lambda$, but not both at the same time. Worse, because
$W_{a,\Psi}$ does not in general commute with $\Gamma$, one can only
prove that the Fock operator $H_{a,\Psi}$ has a spectral gap around 0
for values of $c$ that depend on a lower bound on the eigenvalues of
$\Gamma$. This gap is used centrally to prove the convergence of
Palais-Smale sequences. Therefore, one needs a lower bound on
$\Gamma$.

To obtain this lower bound, we consider the (formal) nonrelativistic
limit of the multiconfiguration Dirac-Fock model, the
multiconfiguration Hartree-Fock model. Let
\begin{align}
  \label{Ik}
  I^K &= \inf\left\{\EHF(a,\Phi), a \in S,
    \Phi \in (H^{1}(\R^{3},\C^{2}))^{K}, \Gram \Phi = 1\right\}
\end{align}
be the ground-state energy of the nonrelativistic multiconfiguration
method of rank $K \geq N$. $I^{N}$ is the Hartree-Fock energy. $I^{K}$
is non-increasing, and converges to $I^{\infty}$, the Schrödinger
energy. The behavior of $I^{K}$ is not precisely known, but a result
by Friesecke \cite{friesecke2003infinitude} shows that $I^{K+2} <
I^{K}$. Therefore, $I^{K} < I^{K-1}$ at least for one every two $K$.
When this strict inequality holds, every minimizer satisfies $\Gamma >
0$. Because of the compactness of
these minimizers (proved in \cite{lewin2004}, Theorem 1), there is a
uniform bound $\gamma_{0}>0$ such that for every minimizer,
$\Gamma_{a} \geq \gamma_{0}$.

Because there is no well-defined ``ground state energy'' in the
relativistic case, we cannot use information of this type
directly. Instead, we fix $\gamma < \gamma_{0}$, and use a min-max
principle to look for solutions in the domain
\begin{align*}
  S_{\gamma} = \{a \in S, \Gamma_{a} \geq \gamma\}.
\end{align*}

By arguments inspired by
\cite{esteban-sere-1999,esteban2001,lewin2004}, we prove that the
min-max principle yields solutions of $H_{a,\Psi} \Psi = \Lambda
\Psi$, for $c$ large enough (Theorem \ref{sol_gamma}). But these are only solutions of the
equation $\mathcal H_{\Psi} a = E a$ if the constraint is not
saturated, \ie{} if $\Gamma_{a} > \gamma$.

To prove that this is the case, we take the nonrelativistic ($c \to
\infty$) limit of the critical points found in the first step. By
arguments similar to the ones in \cite{esteban2001}, we prove that
these critical points converge, up to a subsequence, to a minimizer of
the multiconfiguration Hartree-Fock functional (Theorem
\ref{thm_cv_min}). Therefore, for $c$ large, the constraint
$\Gamma_{a} \geq \gamma$ is not saturated, and we obtain solutions of
the MCDF equations (Corollary \ref{main_corollary}).

In the rest of this paper, we will always assume that $I^{K} <
I^{K-1}$, so that $\Gamma \geq \gamma_{0}$ on the nonrelativistic
minimizers. $\gamma > 0$ is a fixed constant, taken to be less than
$\gamma_{0}$. We also assume $N < Z+1$.

First, for all $\Psi \in (L^{2})^{K}$ such that $\Gram \Psi > 0$
(linearly independent components), we define the normalization
\begin{align}
  g(\Psi) = (\Gram \Psi)^{-1/2} \Psi,
\end{align}
which has the property that $g(\Psi) \in \Sigma$. This normalization
was used in \cite{esteban2001} to prove another variational principle
for the relativistic ``ground state'', which we shall not use here.

Define
\begin{align*}
  \Sigma^{+} &= \Sigma \cap (E^{+})^{K},\\
  &= \left\{\Psi \in (P^{+} E)^{K}, \Gram \Psi = 1\right\}.
\end{align*}
We will find solutions to our equations as a result of the following
variational principle:
\begin{align}
  \label{multidfvarprinc}
  I_{c,\gamma} = \inf_{a \in S_{\gamma}, \Psi^{+} \in \Sigma^{+}}
  \sup_{\Psi^{-} \in (E^{-})^{K}} \E(a,g(\Psi^{+} + \Psi^{-})).
\end{align}
\section{Results}
Our first result is the well-posedness of our variational principle:
\begin{theorem}[Existence of solutions in $S_{\gamma}$]
  \label{sol_gamma}
  Let $N < Z+1$. There are constants $K_{1}, K_{2} > 0$ such that, for
  $c$ large enough, there is a triplet $a_{*} \in S_{\gamma},
  \Psi_{*}^{+} \in \Sigma^{+}, \Psi_{*}^{-} \in (E^{-})^{K}$ solution
  of the variational principle \eqref{multidfvarprinc}:
  \begin{align*}
    \E(a_{*},g(\Psi^{+}_{*}+\Psi_{*}^{-})) &= \max_{\Psi^{-} \in (E^{-})^{K}}
    \E(a_{*},g(\Psi_{*}^{+}+\Psi^{-})),\\
    &= \min_{a \in S_{\gamma}, \Psi^{+} \in \Sigma^{+}} \max_{\Psi^{-} \in (E^{-})^{K}}
    \E(a,g(\Psi^{+}+\Psi^{-})).
  \end{align*}
  
  Denoting $\Psi_{*} = g(\Psi_{*}^{+}+\Psi_{*}^{-})$, $\Psi_{*}$ is a
  solution of the equation $H_{a_{*},\Psi_{*}} \Psi_{*} = \Lambda_{*}
  \Psi_{*}$ in $\Sigma$.

  The Hermitian matrix of Lagrange multipliers $\Lambda_{*}$ satisfies the
  estimates
  \begin{align}
    \label{control_Lambda}
    (c^{2}-K_{1}) \Gamma_{*} &\leq \Lambda_{*} \leq (c^{2} - K_{2}) \Gamma_{*}.
  \end{align}

  Furthermore, if $\Gamma_{*} > \gamma$, then $a_{*}$ is a solution of
  $\mathcal H_{\Psi_{*}} a_{*} = I_{c,\gamma} a_{*}$.
\end{theorem}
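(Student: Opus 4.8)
The plan is to exploit the two-level structure of the min-max \eqref{multidfvarprinc}: first to reduce away the inner supremum over $\Psi^- \in (E^-)^K$ by a concavity argument, then to treat the outer infimum of the resulting reduced functional over $S_\gamma \times \Sigma^+$ by a direct method. For the \emph{inner problem} I would fix $(a,\Psi^+) \in S_\gamma \times \Sigma^+$ and study $\Psi^- \mapsto \E(a, g(\Psi^+ + \Psi^-))$. The decisive point is that on $(E^-)^K$ the operator $D_c \Gamma_a$ is bounded above by $-c^2\gamma$ (using $\Gamma_a \geq \gamma$), whereas $V\Gamma_a$ and $W_{a,\Psi}$ are controlled by $\sqrt{1-\Delta}$ through the Kato inequality \eqref{katoineq} and the standard Hardy inequality, with constants independent of $c$. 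Composing with the smooth normalisation $g$ --- well defined and close to the identity as long as $\norm{\Psi^-}_E$ stays bounded, so that $\Gram(\Psi^+ + \Psi^-) > 0$ --- the map $\Psi^- \mapsto \E(a,g(\Psi^+ + \Psi^-))$ should become, for $c$ above a constant depending only on $\gamma, Z, N$, strictly concave and anticoercive on $(E^-)^K$, hence have a unique maximiser $\Psi^-(a,\Psi^+)$; I expect this maximiser to be small in $E$ (vanishing as $c \to \infty$), to depend in a $C^1$ way on $(a,\Psi^+)$, and to satisfy the stationarity equation in the $\Psi^-$-directions. Setting $F_c(a,\Psi^+) = \E\big(a, g(\Psi^+ + \Psi^-(a,\Psi^+))\big)$, one has $I_{c,\gamma} = \inf_{S_\gamma \times \Sigma^+} F_c$.

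For the \emph{outer problem}, the same Kato and Hardy estimates, combined now with the lower bound $D_c \Gamma_a \geq c^2\gamma$ on $(E^+)^K$, should show that $F_c$ is bounded below --- so that $I_{c,\gamma}$ is finite --- and that any minimising sequence $(a_n,\Psi_n^+)$ is bounded in $(E^+)^K$; after the rescaling $d_c$ the bound becomes independent of $c$, which I will need for \eqref{control_Lambda}. Since $S_\gamma$ and $\mathcal U(K)$ are compact, I may assume $a_n \to a_*$ in $S_\gamma$, use the group action \eqref{groupaction} to fix a convenient gauge, and extract $\Psi_n^+ \rightharpoonup \Psi_*^+$ weakly in $(E^+)^K$. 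The heart of the argument is to upgrade this to strong convergence, i.e. to rule out a loss of mass ($\Gram \Psi_*^+ < 1$). As in \cite{lewin2004}, minimality should provide a second-order (Palais--Smale-type) inequality along the sequence that keeps the would-be Lagrange multipliers away from the essential spectrum of the Fock operator; together with the hypothesis $N < Z+1$ --- which makes it energetically unfavourable for any part of an orbital to escape to infinity --- this should exclude vanishing and dichotomy in the concentration-compactness alternative, just as in the nonrelativistic multiconfiguration analysis of Lewin and the single-configuration relativistic analysis of Esteban--Séré \cite{esteban-sere-1999,esteban2001}. Hence $\Psi_n^+ \to \Psi_*^+$ strongly, $\Psi_*^+ \in \Sigma^+$, the pair $(a_*,\Psi_*^+)$ minimises $F_c$, and by continuity of the inner problem $\Psi_*^- := \Psi^-(a_*,\Psi_*^+)$ realises the supremum, which gives the displayed equalities of the theorem.

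It remains to derive the equations and the estimate on $\Lambda_*$. Stationarity of $F_c$ in the $\Psi^+$-directions of $\Sigma^+$, together with the maximality in the $\Psi^-$-directions, should combine --- after differentiating the normalisation $g$ --- into the single equation $H_{a_*,\Psi_*}\Psi_* = \Lambda_* \Psi_*$ in $\Sigma$, where $\Psi_* = g(\Psi_*^+ + \Psi_*^-)$ and $\Lambda_*$ is the Hermitian multiplier of the Gram constraint. Testing this equation against $\Psi_*$ and using $\Psi_* \in \Sigma$ writes $\Lambda_*$, in the sense of Hermitian matrices, as the sum of the contributions of $D_c\Gamma_{a_*}$, of $V\Gamma_{a_*}$ and of $2W_{a_*,\Psi_*}$; since $\Psi_*$ differs from its $(E^+)^K$-part by a term that vanishes in $E$ as $c \to \infty$ (Step 1), the first contribution equals $c^2\Gamma_{a_*} + O(1)$ as Hermitian matrices --- the $O(1)$ being uniform in $c$ thanks to the rescaled a priori bound of Step 2 --- while \eqref{katoineq} and $W_{a_*,\Psi_*} \geq 0$ bound the other two by $O(1)\,\Gamma_{a_*}$; this yields \eqref{control_Lambda} for suitable $K_1$, $K_2$. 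Finally, stationarity in $a \in S_\gamma$ gives $\mathcal H_{\Psi_*} a_* = E a_* + \mu$, where $\mu$ arises from the inequality constraint $\Gamma_a \geq \gamma$ and vanishes when that constraint is inactive; testing against $a_*$ and using $\lela a_*, \mathcal H_{\Psi_*} a_*\rira = \E(a_*,\Psi_*) = I_{c,\gamma}$ identifies $E = I_{c,\gamma}$, so that if $\Gamma_{a_*} > \gamma$ then $a_*$ solves $\mathcal H_{\Psi_*} a_* = I_{c,\gamma}\, a_*$. I expect the main obstacle to be this compactness step: transplanting the concentration-compactness machinery of \cite{lewin2004,esteban-sere-1999,esteban2001} to the reduced functional $F_c$ --- and in particular verifying that the second-order information survives the $\Psi^-$-reduction and still keeps the multipliers below the essential spectrum --- is where the real work lies.
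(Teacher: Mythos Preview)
Your overall architecture---reduce the inner sup by concavity, then minimise the reduced functional using second-order information to control the multipliers---is the paper's strategy. But your derivation of the \emph{upper} bound in \eqref{control_Lambda} has a genuine gap. Testing $H_{a_*,\Psi_*}\Psi_* = \Lambda_*\Psi_*$ against $\Psi_*$ gives $(\Lambda_*)_{ij} = \lela \psi_{*,i}, H_{a_*,\Psi_*}\psi_{*,j}\rira$, and your bookkeeping yields at best $\Lambda_* = c^2\Gamma_* + O(1)$, i.e.\ $(c^2-K)\Gamma_* \leq \Lambda_* \leq (c^2+K)\Gamma_*$. The Dirac contribution on $(E^+)^K$ is $\geq c^2\Gamma_*$, and the $W$-term you invoke as nonnegative pushes $\Lambda_*$ \emph{up}, not down; Kato only bounds the size of the $V$-contribution, not its sign as a matrix. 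So testing the first-order equation cannot produce the strict gap $\Lambda_* \leq (c^2-K_2)\Gamma_*$ with $K_2>0$, and this strict gap is exactly what is needed both for the Palais--Smale compactness (the multipliers must lie strictly below the essential spectrum $[c^2\gamma,\infty)$ of $D_c\Gamma$ on $(E^+)^K$) and for the nonrelativistic limit.

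In the paper this upper bound comes from \emph{second-order} minimality, not from the equation: one perturbs a single orbital by a test function $\delta$ placed far from the system (a dilation trick), so that $\delta$ feels the effective attractive charge $Z-(N-1)>0$; the resulting upper bound on $\E''$ at the minimiser, combined with the Hessian inequality from a Borwein--Preiss sequence, forces $(\Lambda_n)_{11} \leq (c^2-K_2)\Gamma_{11}$, and the group action \eqref{groupaction} upgrades this to the full matrix inequality. This is precisely where the hypothesis $N<Z+1$ enters quantitatively. You correctly flag the second-order/compactness step as ``where the real work lies'', but you then try to recover \eqref{control_Lambda} afterwards by an argument that cannot give it; the bound must be extracted \emph{along the minimising sequence} from the almost-nonnegativity of the Hessian, and carried to the limit. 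A secondary point: your concavity claim for the inner problem is not global---the paper first restricts to the sublevel set $\E(a,\Psi^+)\leq Nc^2$ (justified by an upper bound $I_{c,\gamma}\leq Nc^2+I^K+o(1)$) to get $c$-independent $H^{1/2}$-bounds on $\Psi^+$, without which the nonlinear $W$-term is not controlled and the concavity estimate on $F_{a,\Psi^+}$ fails.
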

\begin{remark}
  In theory, one could give an explicit estimate of the minimal value
  of $c$ as a function of $\gamma$. However, since there is no known
  lower bound on $\gamma_{0}$ as a function of $K$, this would be of
  little use, and considerably complexify this paper. Free from the
  need of explicit constants, we use a strategy of proof that is
  simpler than the one used in Theorem 1.2 of \cite{esteban-sere-1999}
  in the single-configuration case.
\end{remark}

We now study the nonrelativistic limit of these solutions, thanks to
the control \eqref{control_Lambda} on the Lagrange multipliers:
\begin{theorem}[Non-relativistic limit]
  \label{thm_cv_min}
  Let $I^{K} < I^{K-1}$, $N< Z+1$, $c_{n} \to \infty$, and let $(a_{n},\Psi_{n})$ be the solution
  of \eqref{multidfvarprinc} obtained by Theorem~ \ref{sol_gamma} with $c =
  c_{n}$. Then, up to a subsequence,
  \begin{align*}
    a_{n}&\to a,\\
    \Psi_{n}&\to\mat{\Phi\\0}
  \end{align*}
  in $H^{1}$ norm, where $(a,\Phi) \in S_{\gamma}\times
  (H^{1}(\R^{3},\C^{2}))^{K}$ is a minimizer of
  \begin{align}
    \label{varprincHF}
    I^K &= \inf\left\{\EHF(a,\Phi), a \in S, \Phi \in
      (H^{1}(\R^{3},\C^{2}))^{K}, \Gram \Phi = 1\right\}.
  \end{align}

  The min-max level $I_{c,\gamma}$ satisfies the asymptotics
  \begin{align*}
    I_{c,\gamma} = Nc^{2} + I^{K} + o_{c\to\infty}(1).
  \end{align*}
\end{theorem}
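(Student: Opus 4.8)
The proof combines three ingredients: an upper bound for $I_{c,\gamma}$ obtained by plugging a rescaled MCHF minimizer into \eqref{multidfvarprinc}; uniform‑in‑$c$ a priori estimates for the critical points $(a_n,\Psi_n)$ of Theorem~\ref{sol_gamma}, showing that their ``lower'' two‑spinor component vanishes as $c_n\to\infty$; and Lewin's compactness of MCHF minimizing sequences (\cite{lewin2004}, Theorem~1). For the upper bound, let $(\bar a,\bar\Phi)$ minimize $I^K$; it exists by \cite{lewin2004} because $I^K<I^{K-1}$, and $\Gamma_{\bar a}\geq\gamma_0>\gamma$ so $\bar a\in S_\gamma$, while elliptic regularity gives $\bar\varphi_i\in H^2$. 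For each $c$ form the ``kinetic balance'' four‑spinors $\mat{\bar\varphi_i\\ -\tfrac{i}{2c}(\sigma\cdot\nabla)\bar\varphi_i}$, project them onto $(E^{+})^K$ and renormalize with $g$, obtaining $\Psi^+_c\in\Sigma^+$ with $\Psi^+_c\to\mat{\bar\Phi\\0}$ in $E$. Since $\Psi^-\mapsto\E(\bar a,g(\Psi^++\Psi^-))$ is strictly concave with curvature of order $c^2$ (the concavity underlying Theorem~\ref{sol_gamma}) and has a gradient of size $O(1)$ at $\Psi^-=0$, its maximum equals $\E(\bar a,g(\Psi^+_c))+o(1)$; expanding $D_c-c^2$ on the upper/lower components and using $(\sigma\cdot\nabla)^2=\Delta$ gives $\E(\bar a,g(\Psi^+_c))-Nc^2\to\EHF(\bar a,\bar\Phi)=I^K$. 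Hence $I_{c,\gamma}\leq Nc^2+I^K+o_{c\to\infty}(1)$.

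\textbf{A priori estimates and reduction to MCHF.} Write $\psi_i^{(n)}=\mat{\varphi_i^{(n)}\\ \chi_i^{(n)}}$, $\Phi_n=(\varphi_i^{(n)})_i$, $X_n=(\chi_i^{(n)})_i$, $\eta_i^{(n)}=\chi_i^{(n)}+\tfrac{i}{2c_n}(\sigma\cdot\nabla)\varphi_i^{(n)}$ and $H_n=(\eta_i^{(n)})_i$. Testing $H_{a_n,\Psi_n}\Psi_n=\Lambda_n\Psi_n$ against $\Psi_n$, and using the two‑sided control \eqref{control_Lambda} (equivalently $0\leq c_n^2\Gamma_n-\Lambda_n\leq K_1\Gamma_n$), the Hardy‑type inequalities on $E^\pm$, Hardy's inequality and Kato's inequality, one runs the bootstrap of \cite{esteban2001}, adapted to the matrix‑weighted setting, to obtain that $\|\Phi_n\|_{H^1}$ and $\|\Phi_n\|_{H^2}$ are bounded uniformly in $n$, that $\|X_n\|_{H^1}=O(c_n^{-1})$, and, feeding this back into the lower component of the equation, that $c_n^2\langle H_n,\Gamma_n H_n\rangle\to 0$. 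Completing the square in the Dirac part of the energy yields
\[ \E(a_n,\Psi_n)-Nc_n^2 = \langle \Phi_n,-\tfrac12\Delta\,\Gamma_n\Phi_n\rangle - 2c_n^2\langle H_n,\Gamma_n H_n\rangle + \langle\Psi_n,V\Gamma_n\Psi_n\rangle + \langle\Psi_n,W_{a_n,\Psi_n}\Psi_n\rangle , \]
and, estimating the $X_n$‑dependent pieces of the last two terms by Kato, this equals $\EHF(a_n,\Phi_n)-2c_n^2\langle H_n,\Gamma_n H_n\rangle+o(1)=\EHF(a_n,\Phi_n)+o(1)$. Since $\Gram\Phi_n\to 1$ we may renormalize: $\EHF(a_n,g(\Phi_n))=\E(a_n,\Psi_n)-Nc_n^2+o(1)=I_{c_n,\gamma}-Nc_n^2+o(1)\leq I^K+o(1)$ by the upper bound. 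But $a_n\in S$ and $g(\Phi_n)$ satisfies the Gram constraint, so $\EHF(a_n,g(\Phi_n))\geq I^K$; hence $(a_n,g(\Phi_n))$ is a minimizing sequence for $I^K$.

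\textbf{Compactness and conclusion.} By \cite{lewin2004}, Theorem~1 --- whose hypotheses $I^K<I^{K-1}$ and $N<Z+1$ are in force --- up to a subsequence $a_n\to a$ in $\C^{\binom{K}{N}}$ and $g(\Phi_n)\to\Phi$ in $(H^1(\R^3,\C^2))^K$, with $(a,\Phi)$ a minimizer of $I^K$; passing to the limit in $\Gamma_{a_n}\geq\gamma$ gives $a\in S_\gamma$. Then $\Phi_n=(\Gram\Phi_n)^{1/2}g(\Phi_n)\to\Phi$ in $(H^1)^K$, and $\|X_n\|_{H^1}=O(c_n^{-1})\to 0$, so $\Psi_n\to\mat{\Phi\\0}$ in $H^1$. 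Finally, since $\E(a_n,\Psi_n)-Nc_n^2$ is squeezed between $I^K+o(1)$ (upper bound) and $\EHF(a_n,g(\Phi_n))-o(1)\to I^K$, we get $I_{c,\gamma}=Nc^2+I^K+o_{c\to\infty}(1)$.

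\textbf{Main obstacle.} The technical core is the a priori analysis. Because the group action \eqref{groupaction} cannot diagonalize $\Lambda_n$ and $\Gamma_n$ simultaneously, and $W_{a_n,\Psi_n}$ does not commute with $\Gamma_n$, the componentwise Dirac equations remain coupled; extracting the uniform $H^1$ and $H^2$ bounds on $\Phi_n$ together with the $o(1)$ decay of the lower spinor --- equivalently $c_n^2\langle H_n,\Gamma_n H_n\rangle\to 0$, which is exactly what makes the energy identity collapse to $\EHF$ --- requires the two‑sided bound \eqref{control_Lambda} and a careful bootstrap modeled on \cite{esteban2001}. A secondary point is that loss of compactness (``escape to infinity'') is not excluded directly at the relativistic level but only after the reduction has turned $(a_n,\Psi_n)$ into a genuine minimizing sequence for $I^K$, where the binding inequality $I^K<I^{K-1}$ (via \cite{lewin2004}) takes over.
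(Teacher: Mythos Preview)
Your proof is correct and rests on the same three ingredients as the paper: the upper bound $I_{c,\gamma}\leq Nc^{2}+I^{K}+o(1)$ (the paper's Proposition~\ref{asymp_energy}), the a~priori analysis of the upper/lower components via the Esteban--S\'er\'e bootstrap under the two-sided multiplier control \eqref{control_Lambda} (the paper's Proposition~\ref{lem_cv_sols}), and Lewin's compactness. The organization, however, differs in one substantive point. The paper substitutes the expansion $\mathcal X_{n}=\tfrac{1}{2c_{n}}L\Phi_{n}+O(c_{n}^{-3})$ back into the \emph{upper} block of the Dirac--Fock system to obtain an approximate MCHF Euler--Lagrange equation for $\Phi_{n}$, and then invokes the Palais--Smale compactness argument from the proof of Theorem~1 in \cite{lewin2004}; only afterwards does it use the upper bound to identify the limit as a minimizer. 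You instead complete the square in the kinetic part to get the exact identity $\E(a_{n},\Psi_{n})-Nc_{n}^{2}=\EHF(a_{n},\Phi_{n})-2c_{n}^{2}\langle H_{n},\Gamma_{n}H_{n}\rangle+o(1)$, use $c_{n}^{2}\langle H_{n},\Gamma_{n}H_{n}\rangle\to 0$ together with the upper bound to conclude that $(a_{n},g(\Phi_{n}))$ is a \emph{minimizing} sequence for $I^{K}$, and then apply Lewin's Theorem~1 in its minimizing-sequence form. Your route is slightly more elementary in that it avoids deriving the approximate nonrelativistic Euler--Lagrange equation and bypasses the PS machinery, at the cost of needing the upper bound earlier (before compactness rather than after). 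Two minor remarks: the upper bound is already available as Proposition~\ref{asymp_energy}, so the kinetic-balance test functions are unnecessary (the paper simply uses $g(P^{+}\Psi_{*})$); and the $H^{2}$ bound on $\Phi_{n}$ you mention is not actually used anywhere in your argument.
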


Since, if $I^{K} < I^{K-1}$, any minimizer of \eqref{varprincHF} must
satisfy $\Gamma \geq \gamma_{0} > \gamma$, we immediately obtain
\begin{corollary}
  \label{main_corollary}
  If $I^{K} < I^{K-1}$, $N < Z+1$, for c large enough, there are
  solutions of the multiconfiguration Dirac-Fock equations
  \eqref{DF-1}-\eqref{DF-2}.
\end{corollary}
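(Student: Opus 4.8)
The plan is to deduce the corollary from Theorems~\ref{sol_gamma} and~\ref{thm_cv_min} by a short compactness-and-contradiction argument, since all the analytic work is already contained in those two results. Indeed, for $c$ large enough Theorem~\ref{sol_gamma} produces a triplet $(a_{*},\Psi_{*}^{+},\Psi_{*}^{-})$ realizing the min-max \eqref{multidfvarprinc}, such that $\Psi_{*}=g(\Psi_{*}^{+}+\Psi_{*}^{-})\in\Sigma$ solves \eqref{DF-1}, namely $H_{a_{*},\Psi_{*}}\Psi_{*}=\Lambda_{*}\Psi_{*}$, and moreover solves the coupled equation \eqref{DF-2}, $\mathcal H_{\Psi_{*}}a_{*}=I_{c,\gamma}a_{*}$, as soon as the constraint defining $S_{\gamma}$ is not saturated, i.e. $\Gamma_{a_{*}}>\gamma$ (strict inequality of Hermitian matrices: the smallest eigenvalue of $\Gamma_{a_{*}}$ is $>\gamma$). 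Hence the whole statement reduces to proving that, for $c$ large enough, one has $\Gamma_{a_{*}}>\gamma$.

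Suppose this fails. Then there is a sequence $c_{n}\to\infty$ for which the solutions $(a_{n},\Psi_{n})$ furnished by Theorem~\ref{sol_gamma} saturate the constraint, i.e. the smallest eigenvalue $\mu_{n}$ of $\Gamma_{a_{n}}$ equals $\gamma$ (it is always $\geq\gamma$ since $a_{n}\in S_{\gamma}$). By Theorem~\ref{thm_cv_min}, up to a subsequence $a_{n}\to a$ and $\Psi_{n}\to\mat{\Phi\\0}$ in $H^{1}$, with $(a,\Phi)\in S_{\gamma}\times(H^{1}(\R^{3},\C^{2}))^{K}$ a minimizer of \eqref{varprincHF}. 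Now I would invoke the nonrelativistic input recalled in the introduction: under the standing assumption $I^{K}<I^{K-1}$, every minimizer of \eqref{varprincHF} has $\Gamma>0$, and by the compactness of these minimizers proved in \cite{lewin2004} there is a uniform $\gamma_{0}>0$ with $\Gamma_{a}\geq\gamma_{0}$; recall that $\gamma$ was fixed with $\gamma<\gamma_{0}$. Since $b\mapsto\Gamma_{b}$ is continuous on $S$ (it is quadratic in $b$), $a_{n}\to a$ forces $\Gamma_{a_{n}}\to\Gamma_{a}$, hence $\mu_{n}\to\mu$, the smallest eigenvalue of $\Gamma_{a}$, with $\mu\geq\gamma_{0}>\gamma$. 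This contradicts $\mu_{n}=\gamma$ for all $n$. Therefore, for all $c$ large enough, $\Gamma_{a_{*}}>\gamma$, and Theorem~\ref{sol_gamma} then gives that $(a_{*},\Psi_{*})$ solves \eqref{DF-1} and \eqref{DF-2} simultaneously.

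The only point requiring care in the corollary itself is that the strict inequality $I^{K}<I^{K-1}$ is genuinely used, and used solely to guarantee the uniform gap $\gamma_{0}$ for nonrelativistic minimizers together with the freedom to choose $\gamma$ strictly below it; without it one could not exclude that the limiting minimizer has a vanishing occupation number and that the constraint stays saturated. Thus the real difficulty does not lie here but upstream, in establishing Theorem~\ref{thm_cv_min} — proving strong $H^{1}$ convergence of the relativistic critical points to a nonrelativistic minimizer, with the two-sided control \eqref{control_Lambda} on the Lagrange multipliers $\Lambda_{*}$ as the main tool for passing to the limit in \eqref{DF-1} and identifying the limit as a genuine minimizer of \eqref{varprincHF}.
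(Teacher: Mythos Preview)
Your proof is correct and is exactly the argument the paper has in mind: the corollary is stated there as an immediate consequence of Theorems~\ref{sol_gamma} and~\ref{thm_cv_min} together with the fact that any minimizer of \eqref{varprincHF} satisfies $\Gamma\geq\gamma_{0}>\gamma$, and your contradiction-by-subsequence argument simply spells this out in full. There is no difference in approach, only in level of detail.
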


The remainder of this paper is dedicated to the proof of Theorems
\ref{sol_gamma} and \ref{thm_cv_min}.

For Theorem \ref{sol_gamma}, we first begin with Proposition \ref{ps}, a
convergence result for Palais-Smale sequences of the functional $\E$
with Lagrange multipliers bounded away from the essential spectrum of
$D_{c} \Gamma$. Then, at $(a,\Psi^{+}) \in S_{\gamma} \times
\Sigma^{+}$ fixed, we study the variational principle
\begin{align*}
  \sup_{\Psi^{-} \in (E^{-})^{K}}\E(a,g(\Psi^{+}+\Psi^{-}))
\end{align*}
in Proposition \ref{prop_h}, under the condition that $\E(a,\Psi^{+}) \leq
Nc^{2}$. We prove in Proposition \ref{asymp_energy} an upper bound on the
asymptotic behavior of $I_{c,\gamma}$ which will enable us to restrict
to this domain, and finally, we prove in Proposition
\ref{lemma_secondorderimplybound} that Palais-Smale sequences with
Morse-type information for the functional
\begin{align*}
  \F_{a}(\Psi^{+}) = \sup_{\Psi^{-} \in
    (E^{-})^{K}}\E(a,g(\Psi^{+}+\Psi^{-}))
\end{align*}
satisfy the hypotheses of Proposition \ref{ps}, and therefore are
precompact. Their limit up to extraction is a solution of our min-max
problem \eqref{multidfvarprinc}.

To prove Theorem \ref{thm_cv_min}, we use the estimates
\eqref{control_Lambda} on the Lagrange multipliers to prove the
compactness of the sequence $(a_{n},\Psi_{n})$, and the asymptotic
behavior from Proposition \ref{lemma_secondorderimplybound} to show that the
limit is a minimizer.

\section{Proof of Theorem \ref{sol_gamma}}
Our first result is the convergence of Palais-Smale sequences with
bounds on the Lagrange multipliers. The proof proceeds as in Lemma 2.1
of \cite{esteban-sere-1999} for the single-configuration case.
\subsection{Palais-Smale sequences for the energy functional}
\begin{proposition}[Palais-Smale sequences for $\mathcal E$]
  \label{ps}
  For $c$ large enough, if $(a_{n},\Psi_{n}) \in
  S_{\gamma} \times \Sigma$ satisfies:
  \begin{enumerate}[(i)]
  \item $ H_{a_n,\Psi_{n}} \Psi_{n} - \Lambda_{n} \Psi_{n} = \Delta_{n}
    \to 0$ in $H^{-1/2}$ \label{hypoel} with $\Lambda_{n}$ Hermitian matrices,
  \item $\liminf \Lambda_n > 0$,
  \item $\limsup c^{2}\Gamma_{n} - \Lambda_n > 0$,
  \end{enumerate}
  then, up to extraction, $(a_{n},\Psi_{n}) \to (a,\Psi)$ in
  $S_{\gamma}\times \Sigma$, where $(a,\Psi)$ is a solution of
  $H_{a,\Psi} \Psi = \Lambda \Psi$.
\end{proposition}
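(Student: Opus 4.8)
The plan is to follow the standard strategy for proving precompactness of Palais-Smale sequences of the Dirac-Fock type functional, adapting Lemma 2.1 of \cite{esteban-sere-1999} to the vectorial multiconfiguration setting. First, I would extract a weakly convergent subsequence. Since $\Psi_n \in \Sigma$, we have $\norm{\Psi_n}_{L^2} = \sqrt K$ bounded, and the coefficients $a_n \in S$ live in a compact set, so up to extraction $a_n \to a$ in $\C^{K\choose N}$ (hence $\Gamma_{a_n} \to \Gamma_a$ and $W_{a_n,\cdot}$ converges in the appropriate sense), and $\Psi_n \rightharpoonup \Psi$ weakly in $E = H^{1/2}$. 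The matrices $\Lambda_n$, being Hermitian and — by hypotheses (ii) and (iii) — eventually sandwiched between a positive matrix and $c^2\Gamma_n - (\text{positive})$, are bounded; so up to extraction $\Lambda_n \to \Lambda$ with $\Lambda > 0$ and $c^2\Gamma_a - \Lambda > 0$. Passing to the weak limit in the equation $H_{a_n,\Psi_n}\Psi_n - \Lambda_n\Psi_n = \Delta_n \to 0$ gives $H_{a,\Psi}\Psi = \Lambda\Psi$, using that the nonlinear terms $V\Gamma_{a_n}\Psi_n$ and $W_{a_n,\Psi_n}\Psi_n$ pass to the limit because multiplication by $V$ and the Hartree-type kernels are locally compact on $H^{1/2}$. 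The remaining and genuinely hard task is to upgrade weak convergence to strong convergence in $E$ (equivalently in $H^{1/2}$, or in $H^1$ once elliptic regularity from hypothesis \eqref{hypoel} is invoked), which amounts to ruling out loss of mass at infinity.

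For the strong convergence, the key device is the spectral gap of the Fock operator. I would write $\phi_n = \Psi_n - \Psi$ and test the difference of equations against $\phi_n$ itself. The free term yields $\lela \phi_n, D_c\Gamma_{a_n}\phi_n\rira$; decomposing $\phi_n = \phi_n^+ + \phi_n^-$ along $E^\pm$ and using the refined Hardy inequality together with the smallness of $c^{-1}$ (the ``$c$ large enough'' hypothesis) and the lower bound $\Gamma_{a_n}\geq\gamma$, one shows that $D_c\Gamma_{a_n} + V\Gamma_{a_n} + 2W_{a_n,\Psi_n} - \Lambda_n$ is, after projecting away the finite-dimensional piece where the equation forces $\Psi$ to live, \emph{uniformly coercive} on $(E^+)^K$ and uniformly anti-coercive on $(E^-)^K$: here hypotheses (ii) and (iii) guarantee that $\Lambda$ stays a fixed distance from the thresholds $0$ and $c^2\Gamma$, i.e. from the bottoms of the positive and negative essential spectra of $D_c\Gamma_{a_n}$. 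The cross terms and the lower-order potential terms involving $\phi_n$ go to zero because $V$ and the Hartree potentials are compact perturbations and $\phi_n \rightharpoonup 0$; the occupation-number matrices converge strongly since $a_n\to a$. This squeezes $\norm{\phi_n^+}_E^2$ and $\norm{\phi_n^-}_E^2$ to zero, giving $\Psi_n \to \Psi$ strongly in $E$, hence $\Gram\Psi = 1$ so $\Psi\in\Sigma$, and $\Gamma_a \geq \gamma$ so $a\in S_\gamma$.

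\textbf{The main obstacle} I anticipate is twofold. First, unlike the single-configuration case, the operator $D_c\Gamma_{a_n}$ does not commute with $W_{a_n,\Psi_n}$ and its spectral subspaces are not the fixed $E^\pm$ but the twisted spectral subspaces of $D_c\Gamma_{a_n}$; one must check that the gap of $H_{a_n,\Psi_n}$ around a band near $c^2\Gamma_{a_n}$ survives the non-commuting perturbation, which is precisely where the lower bound $\gamma$ on the occupation numbers and the largeness of $c$ enter — the constant $K_i$ in the statement will come from tracking these. Second, one must be careful that the convergence $\Lambda_n\to\Lambda$ and the sandwich bounds (ii)–(iii) are genuinely \emph{strict} in the limit: strict positivity of $\Lambda$ and of $c^2\Gamma_a - \Lambda$ is what keeps $\Lambda$ out of the essential spectrum of $D_c\Gamma_a$ and makes the coercivity uniform rather than merely pointwise; this is exactly the role played by the assumption $N < Z+1$, which prevents an orbital from escaping to infinity by ensuring the effective potential felt at infinity is still attractive enough to confine. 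Once these points are handled, the convergence of $a_n$ to a genuine solution of the coefficient equation is not claimed here (that requires $\Gamma_a > \gamma$), so the proposition stops at $H_{a,\Psi}\Psi = \Lambda\Psi$, consistently with the statement.
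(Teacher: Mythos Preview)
Your overall structure is right and mirrors the paper's adaptation of Esteban--S\'er\'e, but two technical points are glossed over in a way that creates a genuine gap.

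First, and most seriously: the assertion that ``$V$ and the Hartree potentials are compact perturbations and $\phi_n \rightharpoonup 0$'' so the cross terms vanish is false for $V$ at the $H^{1/2}$ level. The Coulomb singularity is scaling-critical with respect to $\sqrt{-\Delta}$: the sequence $u_n(x) = n\, u(nx)$ is bounded in $H^{1/2}$, converges weakly to $0$, yet $\lela u_n, \tfrac{1}{|x|} u_n\rira$ is constant in $n$. Hence multiplication by $V$ is \emph{not} compact from $H^{1/2}$ to $H^{-1/2}$, and you cannot kill the $V$-terms by weak convergence alone. The paper handles this by a two-step argument. In Step~1 it first establishes an $H^{1/2}$ bound on $\Psi_n$ via the uniform invertibility estimate $\norm{H_{a_n,\Psi_n}\Psi}_c^* \geq h_0 \norm{\Psi}_c$ (your sketch only records the $L^2$ bound $\norm{\Psi_n}_{L^2}=\sqrt K$, which is not enough to extract a weak $H^{1/2}$ limit), and then upgrades to \emph{strong} $H^{1/2}_{\mathrm{loc}}$ convergence by a bootstrap, inverting $D_c\Gamma + V\Gamma$ and using local compactness of the nonlinear terms. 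Only after this does Step~2 subtract the equations: the local strong convergence, combined with the $1/|x|$ decay of $V$ at infinity, is what allows $V\Gamma\,\varepsilon_n \to 0$ in $H^{-1/2}$ and lets one pass to the ``operator at infinity'' $L_n = D_c\Gamma + 2W_{a,\Psi_n} - \Lambda$ with $V$ removed.

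Second, once $V$ is gone, the paper does not argue by separate coercivity/anti-coercivity plus vanishing cross terms as you propose; instead it uses a short concavity trick. Viewing $Q_n(\delta^-) = \lela \varepsilon_n + \delta^-, L_n(\varepsilon_n + \delta^-)\rira$ as a strictly concave quadratic on $(E^-)^K$ (this uses $\Lambda>0$ and $c$ large), one writes $\lela \varepsilon_n^+, L_n\varepsilon_n^+\rira = Q_n(-\varepsilon_n^-) \leq Q_n(0) - \nabla Q_n(0)[\varepsilon_n^-]$, and both right-hand terms are controlled by $\norm{\varepsilon_n}_E\norm{L_n\varepsilon_n}_{E^*} \to 0$. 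Hypothesis~(iii) then forces $\varepsilon_n^+\to 0$ in $L^2$, hence in $H^{1/2}$; feeding this back gives $L_n\varepsilon_n^-\to 0$, and the anti-coercivity on $(E^-)^K$ finishes. This device sidesteps the off-diagonal terms $\lela \varepsilon_n^+, L_n\varepsilon_n^-\rira$ entirely, whereas your direct coercivity argument would have to control them, and they need not be small without the preliminary local-compactness step.
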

\textit{Proof.}
\paragraph*{\textbf{Step 1 : convergence in $H^{1/2}_{\text{loc}}$}}
Let $\Psi \in E^{K}$, and $\Psi^{\pm} = P^{\pm} \Psi$. Using the
inequality \eqref{katoineq},
\begin{align*}
  \lela \Psi^{+}, H_{a_n,\Psi_{n}} \Psi^{+} \rira &\geq \lela
  \Psi^{+}, \Gamma_{n} \sqrt{c^{4}-c^{2}\Delta} \Psi^{+}\rira + \lela
  \Psi^{+}, \Gamma_{n} V \Psi^{+}\rira,\\
  &\geq \lela \Psi^{+}, \Gamma_{n} \sqrt{c^{4}-c^{2}\Delta} \Psi^{+}\rira
  -  C_{1} \norm{\Psi^{+}}_{E}^{2},\\
  &\geq (\gamma c^{2} - C_{1} c)\norm{\Psi^{+}}_{c}^{2},
\end{align*}
where $C_{1} > 0$ depends on $Z$.
Similarly, using \eqref{katoineq} again,
\begin{align*}
  \lela \Psi^{-}, H_{a_n,\Psi_{n}} \Psi^{-} \rira &\leq - (\gamma c^{2} - C_{2} c)\norm{\Psi^{-}}_{c}^{2},
\end{align*}
with $C_{2} > 0$ depending on $N$.

Now, $\Psi^{+}$ and $\Psi^{-}$ are orthogonal for the $c$ scalar
product, so
\begin{align*}
  \norm{\Psi}_{c}^{2} = \norm{\Psi^{+}}_{c}^{2} +
  \norm{\Psi^{-}}_{c}^{2} = \norm{\Psi^{+} - \Psi^{-}}_{c}^{2}.
\end{align*}

Denoting by $\norm{\cdot}_{c}^{*}$ the dual norm of
$\norm{\cdot}_{c}$,
\begin{align}
  \norm{H_{a_n,\Psi_{n}} \Psi}_{c}^{*} &\geq \text{Re}\; \frac 1 {\norm{\Psi}_{c}}\lela \Psi^{+} -
  \Psi^{-}, H_{a_n,\Psi_{n}} \Psi\rira,\notag\\
  &=  \frac 1 {\norm{\Psi}_{c}} \lp \lela \Psi^{+}, H_{a_n,\Psi_{n}}
  \Psi^{+} \rira -  \lela \Psi^{-}, H_{a_n,\Psi_{n}}
  \Psi^{-} \rira \rp,\notag\\
  &\geq \frac 1 {\norm{\Psi}_{c}} \lp c^{2} \gamma-c\max(C_{1},C_{2})\rp
  \lp\norm{\Psi^{+}}_{c}^{2} +\norm{\Psi^{-}}_{c}^{2}\rp,\notag\\
  &\geq h_{0} \norm{\Psi}_{c}\label{h0},
\end{align}
with $h_{0} > 0$ when $c$ is large enough.

We then have
\begin{align*}
  \limsup_{n\to\infty} \norm{\Psi_{n}}_{c}& \leq \limsup_{n\to\infty} \frac 1 {h_{0}} \norm{H_{a_n,\Psi_{n}}
    \Psi_{n}}_{c}^{*} ,\\
  &\leq \limsup_{n\to\infty} \frac 1 {h_{0}}  \lp\norm{\Delta_{n}}_{c}^{*} +
  \norm{\Lambda_{n} \Psi_{n}}_{L^{2}}\rp.
\end{align*}

Therefore, $\Psi_{n}$ is bounded in $c$ norm, \ie{} in
$H^{1/2}$. Extracting a subsequence, again denoted by
$(a_{n},\Psi_{n})$, we may assume that $a_{n}\to a$, $\Gamma_{n} \to
\Gamma$, $\Lambda_{n} \to \Lambda$, and $\Psi_{n} \to \Psi$ weakly in
$H^{1/2}$, strongly in $L^{p}_{\text{loc}}$, $2 \leq p < 3$.

Since $H_{a,\Psi_{n}}$ is self-adjoint from $E^{K}$ to $(E^{K})^{*}$
and bounded away from zero, it is invertible. Define $\Psi'_{n}$ by
$$H_{a,\Psi_{n}} \Psi'_{n} = \Lambda \Psi_{n}.$$ $\Psi'_{n}$ is bounded
in $H^{1/2}$, and therefore precompact in
$L^{p}_{\text{loc}}$, $2 \leq p < 3$.

We partially invert
\begin{align*}
  \Psi'_{n} = (D_{c}\Gamma + V \Gamma)^{-1}(\Lambda
  \Psi_{n} - 2W_{a,\Psi_{n}} \Psi'_{n}).
\end{align*}

From Young's inequality, $W_{a,\Psi_{n}} \Psi'_{n}$ is precompact in
$L^{p}_{\text{loc}}$, $1 \leq p < 3$, so $\Lambda \Psi_{n} -
2W_{a,\Psi_{n}} \Psi'_{n}$ is precompact in
$L^{2}_{\text{loc}}$. Therefore, $\Psi'_{n}$ is precompact in
$H^{1/2}_{\text{loc}}$. We extract again and impose $\Psi'_{n} \to
\Psi$ in $H^{1/2}_{\text{loc}}$. But since
\begin{align*}
  H_{a,\Psi_{n}} (\Psi_{n} - \Psi'_{n}) = (H_{a,\Psi_{n}} -
  H_{a_{n},\Psi_{n}}) \Psi_{n} + \Delta_{n} + (\Lambda_{n} - \Lambda) \Psi_{n}\to 0
\end{align*}
in $H^{-1/2}$, from \eqref{h0}, $\Psi_{n} \to \Psi$ in
$H^{1/2}_{\text{loc}}$.

\paragraph*{\textbf{Step 2 : convergence in $H^{1/2}$}}
We now have convergence of $\Psi_{n}$ to $\Psi$ in
$H^{1/2}_{\text{loc}}$. $\Psi$ satisfies
\begin{align*}
  H_{a,\Psi} \Psi = \Lambda \Psi.
\end{align*}

We now look at the convergence in $H^{1/2}$ by obtaining an
approximate Euler-Lagrange equation satisfied by the error $\eps_{n} = \Psi_{n}
- \Psi$. We have the Euler-Lagrange equations satisfied by $\Psi_{n}$
and $\Psi$:
\begin{align*}
  (D_{c} \Gamma + V \Gamma + 2
  W_{a,\Psi_{n}})\Psi_{n} - \Lambda \Psi_{n}
  &= \Delta'_{n},\\
  (D_{c} \Gamma + V \Gamma + 2 W_{a,\Psi})\Psi - \Lambda \Psi
  &= 0.
\end{align*}
with $\Delta'_{n} \to 0$ in $H^{-1/2}$. Subtracting and using the fact
that $\eps_{n} \to 0$ weakly in $H^{1/2}$ and strongly in
$H^{1/2}_{\text{loc}}$, we get
\begin{align}
  \label{Lntozero}
  L_{n} \eps_{n} \to 0
\end{align}
in $H^{-1/2}$, where
\begin{align}
  L_{n} = D_{c} \Gamma +  2W_{a,\Psi_{n}} - \Lambda
\end{align}
is the Hamiltonian ``at infinity'' seen by $\eps_{n}$.

We now use a concavity argument to extract information on the positive
and negative components $\eps_{n}^{\pm} = P^{\pm}\eps_{n}$ of
$\eps_{n}$ separately.

Define the quadratic functional $Q_{n}$ on $(E^{-})^{K}$ by
\begin{align*}
  Q_{n}(\delta^{-}) = \lela \eps_{n} + \delta^{-}, L_{n}
  (\eps_{n} + \delta^{-})\rira.
\end{align*}

The second order terms are
\begin{align}
  \lela \delta^{-}, L_{n} \delta^{-}\rira &= \lela \delta^{-}, (D_{c}
  \Gamma +  2W_{a,\Psi_{n}} - \Lambda) \delta^{-}\rira,\notag\\
  &\leq - (c^{2} \gamma - C_{2}c) \norm{\delta^{-}}_{c}^{2} - \lela
  \delta^{-}, \Lambda \delta^{-}\rira.
  \label{Lnconcave}
\end{align}
Since $\Lambda > 0$, we obtain that $Q_{n}$ is strictly
concave for $c$ large.

The concavity allows us to write
\begin{align*}
  \lela \eps_{n}^{+}, L_{n} \eps_{n}^{+}\rira &= Q_{n}(-\eps_{n}^{-})\\
  &\leq Q_{n}(0) - \nabla Q_{n}(0) [\eps_{n}^{-}],\\
  &= \lela \eps_{n}, L_{n} \eps_{n} \rira - 2 \lela \eps_{n}^{-},
  L_{n} \eps_{n}\rira,\\
  &\leq 3 \norm{\eps_{n}}_{E} \norm{L_{n} \eps_{n}}_{E^{*}}.
\end{align*}
Hence
\begin{align*}
  \limsup_{n\to\infty} \lela \eps_{n}^{+}, L_{n} \eps_{n}^{+}\rira \leq 0.
\end{align*}

But
\begin{align*}
  \lela \eps_{n}^{+}, L_{n} \eps_{n}^{+}\rira \geq \lela \eps_{n}^{+},
  (c^{2} \Gamma
  - \Lambda)\eps_{n}^{+}\rira
\end{align*}
Since $\Lambda < c^{2} \Gamma$, this implies convergence to 0 of
$\eps_n^{+}$ in $L^{2}$ and then in $H^{1/2}$. But, by
\eqref{Lntozero}, this implies that $L_{n} \eps_{n}^{-} \to 0$ in
$H^{-1/2}$ and therefore that $\lela \eps_{n}^{-}, L_{n}
\eps_{n}^{-}\rira \to 0$. By \eqref{Lnconcave}, we deduce
$\eps_{n}^{-} \to 0$ in $H^{1/2}$, which proves that $\Psi_{n} \to
\Psi$ strongly in $\Sigma$.  \hfill$\square$

\subsection{The reduced functional}
For $(a,\Psi^{+}) \in S_{\gamma} \times \Sigma^{+}$, define the functional
\begin{align*}
  F_{a,\Psi^{+}}(\Psi^{-}) = \E(a,g(\Psi^{+}+\Psi^{-}))
\end{align*}
on $(E^{-})^{K}$. Our goal in this section is to prove
\begin{proposition}[Maximizers of $F$]
  \label{prop_h}
  There is a constant $M_{-} > 0$ such that, for $c$ large enough, for all
  $(a,\Psi^{+}) \in S_{\gamma} \times \Sigma^{+}$ with $\E(a,\Psi^{+})
  \leq Nc^{2}$, the functional $F_{a,\Psi^{+}}$ has a unique maximizer
  $h(a,\Psi^{+})$ in $(E^{-})^{K}$. The map $h$ is smooth, and
  satisfies
  \begin{align}
    \norm{h(a,\Psi^{+})}_{c} &\leq \frac {M_{-}} c.
  \end{align}
\end{proposition}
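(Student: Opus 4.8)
The plan is to treat the supremum defining $F_{a,\Psi^+}$ as a concave maximization problem on the Hilbert space $(E^-)^K$. First I would write out $\E(a,g(\Psi^++\Psi^-))$ explicitly. Recall $g(\Psi) = (\Gram\Psi)^{-1/2}\Psi$; writing $\Psi = \Psi^+ + \Psi^-$ with $\Psi^+\in\Sigma^+$ fixed (so $\Gram\Psi^+ = 1$) and $\Psi^-\in(E^-)^K$, the Gram matrix is $\Gram\Psi = 1 + \Gram\Psi^- + (\text{cross terms vanishing since } E^+\perp E^- \text{ in } L^2)$, i.e. $\Gram\Psi = 1 + \Gram\Psi^-$. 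Thus $g$ is well-defined for every $\Psi^-$, and the normalization is a smooth function of $\Psi^-$. The energy then splits, using that $D_c$, $V\Gamma_a$, $W_{a,\Psi}$ — wait, $W$ depends on $\Psi$, so one must be careful — into a part that is strictly concave in $\Psi^-$ plus lower-order perturbations. The key positivity input is the estimate already used in Proposition~\ref{ps}: for $\Psi^-\in(E^-)^K$,
\begin{align*}
  \lela \Psi^-, (D_c\Gamma_a + V\Gamma_a)\Psi^-\rira \leq -(\gamma c^2 - C_2 c)\norm{\Psi^-}_c^2,
\end{align*}
so the quadratic form in the negative directions is bounded above by $-(\gamma c^2 - O(c))\norm{\Psi^-}_c^2$, hence strictly concave and coercive (to $-\infty$) for $c$ large. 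The terms coming from $W_{a,\Psi}$ and from the $\Psi^-$-dependence of the normalization $(\Gram\Psi)^{-1/2}$ are of lower order in $c$ — they are controlled by the Kato inequality \eqref{katoineq} and Young's inequality, bounded by $C\norm{\Psi^-}_E^2 \leq (C/c)\cdot c\norm{\Psi^-}_c^2$ relative to the leading term — and so do not destroy strict concavity for $c$ large.

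The key steps, in order: (1) Write the explicit formula for $F_{a,\Psi^+}(\Psi^-)$ and check it is smooth on all of $(E^-)^K$ (the Gram matrix is invertible for every $\Psi^-$, as above). (2) Establish the a priori coercivity estimate: using the hypothesis $\E(a,\Psi^+)\le Nc^2$ together with the concavity estimate above, show $F_{a,\Psi^+}(\Psi^-) \to -\infty$ as $\norm{\Psi^-}_c\to\infty$, uniformly in $(a,\Psi^+)$ in the allowed set. Concretely, expand $F_{a,\Psi^+}(\Psi^-) = \E(a,\Psi^+) + (\text{linear in }\Psi^-) + (\text{quadratic}) + \ldots$ and bound above by $Nc^2 + O(c)\norm{\Psi^-}_c - \tfrac12\gamma c^2\norm{\Psi^-}_c^2$ plus lower-order corrections; this is $\le Nc^2 - 1$ once $\norm{\Psi^-}_c \geq M_-/c$ for a suitable $M_-$ depending only on $\gamma$, $Z$, $N$. (3) Deduce existence of a maximizer: $F_{a,\Psi^+}$ is weakly upper semicontinuous (the leading quadratic form $\Psi^-\mapsto -\lela\Psi^-,\sqrt{c^4-c^2\Delta}\,\Psi^-\rira$-type term is weakly u.s.c., the remaining terms are weakly continuous by local compactness of the Sobolev embedding applied to the $W$ term) and coercive, so a maximizer $h(a,\Psi^+)$ exists, and by the a priori bound it lies in the ball $\norm{\Psi^-}_c\le M_-/c$. (4) Uniqueness: strict concavity of $F_{a,\Psi^+}$ for $c$ large — here I would compute the second derivative (Hessian) of $F_{a,\Psi^+}$ and show it is $\le -(\gamma c^2 - O(c)) < 0$ on the ball, again by absorbing the $W$-terms and normalization-derivative terms into the leading negative term. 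Strict concavity gives uniqueness of the critical point, hence of the maximizer. (5) Smoothness of $h$: apply the implicit function theorem to the Euler–Lagrange equation $\nabla_{\Psi^-}\E(a,g(\Psi^++\Psi^-)) = 0$, whose linearization is the Hessian, invertible by step (4).

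The main obstacle I expect is step (4), verifying strict concavity globally (or at least on the relevant ball) rather than just near a critical point, because $W_{a,\Psi}$ is a genuinely quartic term in $\Psi$ and its second variation in $\Psi^-$ need not have a sign. The resolution is quantitative: all such terms carry a factor that is $O(1)$ in the $E$-norm while the good term carries $\gamma c^2$ in the $c$-norm, and since $\norm{\cdot}_E \le \norm{\cdot}_c$ with the two norms comparable up to $c$, the bad terms are a factor $1/c$ smaller. One must be careful that the a priori bound $\norm{\Psi^-}_c \le M_-/c$ is used self-consistently: restrict attention first to this ball, prove strict concavity and the fixed-point/IFT statement there, and check the maximizer cannot escape it. A secondary technical point is differentiating $(\Gram(\Psi^++\Psi^-))^{-1/2}$ in $\Psi^-$ and bounding its derivatives — routine but needs the fact that $\Gram\Psi^- = O(\norm{\Psi^-}_c^2/c^?)$ stays small on the ball, so $(\Gram\Psi)^{-1/2} = 1 + O(1/c^2)$ and all its $\Psi^-$-derivatives are likewise small, contributing only lower-order terms.
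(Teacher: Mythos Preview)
Your architecture---localize to a ball of $c$-radius $M_-/c$, prove uniform strict concavity there, then deduce existence, uniqueness and smoothness---is exactly the paper's. But step (2) has a real gap. You claim $F_{a,\Psi^+}(\Psi^-)\to-\infty$ as $\norm{\Psi^-}_c\to\infty$, justified by the Taylor-type bound $Nc^2+O(c)\norm{\Psi^-}_c-\tfrac12\gamma c^2\norm{\Psi^-}_c^2$. This is false: because of the normalization $g$, the vector $g(\Psi^++\Psi^-)$ stays in $\Sigma$, so the energy remains bounded; along a ray $\Psi^-=t\Phi^-$ with $\Gram\Phi^->0$ the normalized vector converges to $(\Gram\Phi^-)^{-1/2}\Phi^-\in\Sigma$ and $F$ tends to a finite limit. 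Your quadratic upper bound is a second-order expansion at $0$ and cannot hold for all $\Psi^-$. The paper's fix (Lemma~\ref{lemma_apriori_psiminus}) is a direct, non-perturbative estimate: with $G=1+\Gram\Psi^-$ one shows $\lela\Psi,D_c\Gamma\Psi\rira\le N(c^2+M_D)-2\gamma c^2\norm{G^{-1/2}\Psi^-}_c^2$ valid for \emph{every} $\Psi^-$, using the matrix bound $D_c|_{\Span(\psi_i^+)}\le c^2+M_D$, and then unwinds the normalization to conclude $\norm{\Psi^-}_c\le M_-/c$ whenever $F(\Psi^-)\ge F(0)$.

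That matrix bound, and indeed every ``$O(c)$'' or ``$O(1)$'' constant in your sketch, silently presupposes a uniform estimate $\norm{\Psi^+}_E\le M_+$ independent of $c$. This is the actual content of the hypothesis $\E(a,\Psi^+)\le Nc^2$, which you invoke but never exploit: the paper (Lemma~\ref{lemma_apriori_psiplus}) extracts it by combining $\E\le Nc^2$ with Kato's inequality and the elementary Fourier lower bound $\sqrt{c^4+c^2|\xi|^2}\ge c^2-\alpha+|\xi|\sqrt{2\alpha-\alpha^2/c^2}$. Without this step the coefficient of your linear term and the error terms in your Hessian estimate are not uniform in $(a,\Psi^+)$, and your choice of $M_-$ would depend on $\norm{\Psi^+}_E$. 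Once Lemmas~\ref{lemma_apriori_psiplus} and~\ref{lemma_apriori_psiminus} are in place, your steps (4)--(5) coincide with the paper's Lemma~\ref{lemma_concavity} and the implicit-function argument; the weak upper-semicontinuity machinery in your step (3) then becomes unnecessary, since uniform strict concavity on the ball together with the global localization already give the maximizer directly.
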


We first begin with estimates on $\Psi^{+}$, for which we use the
property $\E(a,\Psi^{+}) \leq Nc^{2}$.

\begin{lemma}[A priori bounds on $\Psi^{+}$]
  \label{lemma_apriori_psiplus}
  There are $M_{+}, M_{D} > 0$ such that, for $c$ large enough, if
  $(a,\Psi^{+}) \in S_{\gamma} \times \Sigma^{+}$ verifies
  $\E(a,\Psi^{+}) \leq Nc^{2}$, then
  \begin{align}
    \label{controlpsiplus1}
    \norm{\Psi^{+}}_{E} &\leq M_+,\\
    \label{controlpsiplus2}
    \left. D_{c}\right|_{\Span(\{\psi^{+}_{i}\})} &\leq c^{2} + M_{D}.
  \end{align}
  \begin{proof}
    From Kato's inequality \eqref{katoineq},
    \begin{align}
      \E(a,\Psi^+) &\geq \lela \Psi^{+}, D_{c} \Gamma \Psi^{+}\rira - C\lela
      \Psi^{+}, \sqrt{-\Delta} \Psi^{+}\rira.
      \label{control_energy}
    \end{align}
    Here and in the rest of this paper, $C$ denotes various positive
    constants independent of $c$. Since $\E(a, \Psi^{+}) \leq Nc^{2}$ and
    $\lela \Psi^{+}, \Gamma \Psi^{+}\rira = N$,
    \begin{align*}
      \lela \Psi^{+}, \left(\sqrt{c^{4}-c^{2}\Delta} - c^{2} - \frac
        {C}\gamma \sqrt{-\Delta}\right)\Gamma\Psi^{+}\rira \leq 0.
    \end{align*}

    In the Fourier domain, we can write for all $0 < \alpha < c^{2}$ by the
    Cauchy-Schwarz inequality
    \begin{align*}
      \sqrt{c^{4} + c^{2}\abs{\xi}^{2}} &\geq c^{2}\left(1-\frac \alpha {c^{2}}\right) +
      c \abs \xi \sqrt{1 - \lp 1-\frac \alpha {c^{2}}\rp^{2}},\\
      &= c^{2} - \alpha + \abs \xi \sqrt{ {2\alpha} - \frac{\alpha^{2}}{c^{2}}}.
    \end{align*}

    Therefore, we obtain
    \begin{align*}
      \lela \Psi^{+}, \left(-\alpha +\left(\sqrt{\alpha -
            \frac{\alpha^{2}}{c^{2}}}  - \frac {C}\gamma\right)\sqrt{-\Delta}\right) \Gamma\Psi^{+}\rira &\leq 0,
    \end{align*}
    so
    \begin{align*}
      \lela \Psi^{+}, \sqrt{-\Delta} \Gamma\Psi^{+}\rira &\leq \frac{N\alpha}{\sqrt{\alpha -
          \frac{\alpha^{2}}{c^{2}}}  - \frac {C}\gamma}.
    \end{align*}

    Taking $\alpha > \sqrt{C/\gamma}$ and $c$ large, $\lela \Psi^{+},
    \sqrt{-\Delta} \Gamma\Psi^{+}\rira$ is bounded independently of
    $c$. Since $\Gamma \geq \gamma$, so is $\norm{\Psi^{+}}_{E}$, and
    \eqref{controlpsiplus1} is proved.

    

    Now, using \eqref{control_energy} again along with our new
    estimate \eqref{controlpsiplus1}, we have
    \begin{align*}
      \lela \Psi^{+}, D_{c} \Gamma \Psi^{+}\rira &\leq Nc^{2} +
      C M_+^{2},\\
      \lela \Psi^{+}, (D_{c}-c^{2}) \Gamma
      \Psi^{+}\rira &\leq C M_+^{2}.
    \end{align*}
    Because $\Gamma$ annd $D_{c} - c^{2}$, taken in the sense of
    operators on $E^{k}$, commute, we have
    \begin{align*}
      \lela \Psi^{+}, (D_{c}-c^{2}) 
      \Psi^{+}\rira & \leq \frac{CM_{+}^{2}}\gamma\\
      \tr A &\leq \frac{CM_{+}^{2}}\gamma
    \end{align*}
    where $A$ is the $K \times K$ Hermitian matrix
    \begin{align*}
      A_{ij} = \lela \psi_{i}^{+}, (D_{c}-c^{2})\psi_{j}^{+}\rira.
    \end{align*}
    
    $A$ is positive semi-definite and its trace is bounded by $\frac
    {C M_{+}^{2}}\gamma$, so $A \leq \frac{CM_{+}^{2}}\gamma$, hence
    the result.
\end{proof}
\end{lemma}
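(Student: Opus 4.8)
The plan is to squeeze both estimates out of the single hypothesis $\E(a,\Psi^{+}) \leq Nc^{2}$ after reducing the energy to a one-body quantity. First I would discard the electron--electron term, which is harmless: $\lela\psi,W_{a,\Psi^{+}}\psi\rira$ is the two-body interaction energy $\lela\psi,\sum_{i<j}\abs{x_{i}-x_{j}}^{-1}\psi\rira$ of the $N$-body function built from $(a,\Psi^{+})$, hence $\geq 0$. Controlling the nuclear term with Kato's inequality \eqref{katoineq} and using that $\Gamma_{a}$ commutes with spatial operators with $\gamma \leq \Gamma_{a}\leq 1$, I get
\begin{align*}
  \E(a,\Psi^{+}) \geq \lela \Psi^{+}, D_{c}\Gamma_{a}\Psi^{+}\rira - \frac{Z\pi}{2}\lela \Psi^{+}, \sqrt{-\Delta}\,\Psi^{+}\rira .
\end{align*}
On $(E^{+})^{K}$ one has $D_{c}=\sqrt{c^{4}-c^{2}\Delta}$, which is $\geq c^{2}$ and commutes with $\Gamma_{a}$; together with $\Gram \Psi^{+}=1$ (so $\lela\Psi^{+},\Gamma_{a}\Psi^{+}\rira=\Tr\Gamma_{a}=N$ and $\norm{\Psi^{+}}_{L^{2}}^{2}=K$) and the hypothesis, this rearranges into
\begin{align*}
  \lela \Psi^{+}, \lp D_{c}-c^{2}-\frac{C}{\gamma}\sqrt{-\Delta}\rp\Gamma_{a}\Psi^{+}\rira \leq 0 ,
\end{align*}
with $C=Z\pi/2$ independent of $c$.

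For \eqref{controlpsiplus1} I would bound $D_{c}-c^{2}$ below by a multiple of $\sqrt{-\Delta}$ through a Cauchy--Schwarz split in Fourier space: for $0<\alpha<c^{2}$,
\begin{align*}
  \sqrt{c^{4}+c^{2}\abs{\xi}^{2}} \geq c^{2}-\alpha + \abs{\xi}\sqrt{2\alpha-\frac{\alpha^{2}}{c^{2}}} ,
\end{align*}
so that $D_{c}-c^{2}\geq -\alpha + \sqrt{2\alpha-\alpha^{2}/c^{2}}\,\sqrt{-\Delta}$ on $(E^{+})^{K}$. Inserting this into the previous display (valid since $\Gamma_{a}\geq 0$) yields
\begin{align*}
  \lp\sqrt{2\alpha-\frac{\alpha^{2}}{c^{2}}}-\frac{C}{\gamma}\rp\lela\Psi^{+},\sqrt{-\Delta}\,\Gamma_{a}\Psi^{+}\rira \leq \alpha N .
\end{align*}
Choosing $\alpha$ large enough that $\sqrt{2\alpha}>C/\gamma$ and then $c$ large makes the bracket bounded below by a positive constant independent of $c$, so $\lela\Psi^{+},\sqrt{-\Delta}\,\Gamma_{a}\Psi^{+}\rira$ is bounded uniformly in $c$; since $\Gamma_{a}\geq\gamma$ and $\norm{\Psi^{+}}_{L^{2}}^{2}=K$, so is $\norm{\Psi^{+}}_{E}^{2}\leq K+\lela\Psi^{+},\sqrt{-\Delta}\,\Psi^{+}\rira$, giving $M_{+}$.

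For \eqref{controlpsiplus2} I would feed the bound $\lela\Psi^{+},\sqrt{-\Delta}\,\Psi^{+}\rira\leq M_{+}^{2}$ back into the energy lower bound to obtain $\lela\Psi^{+},(D_{c}-c^{2})\Gamma_{a}\Psi^{+}\rira\leq CM_{+}^{2}$, then commute $\Gamma_{a}$ past the nonnegative $D_{c}-c^{2}$ and use $\Gamma_{a}\geq\gamma$ to reach $\lela\Psi^{+},(D_{c}-c^{2})\Psi^{+}\rira\leq CM_{+}^{2}/\gamma$. The left-hand side is $\Tr A$ for the Hermitian $K\times K$ matrix $A_{ij}=\lela\psi_{i}^{+},(D_{c}-c^{2})\psi_{j}^{+}\rira$, which is positive semidefinite because $\psi_{i}^{+}\in E^{+}$; hence its top eigenvalue is $\leq\Tr A\leq CM_{+}^{2}/\gamma$, i.e. $A\leq(CM_{+}^{2}/\gamma)\,1_{\C^{K}}$. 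Since the $\psi_{i}^{+}$ are $L^{2}$-orthonormal, this matrix inequality says exactly $\lela\phi,(D_{c}-c^{2})\phi\rira\leq(CM_{+}^{2}/\gamma)\norm{\phi}_{L^{2}}^{2}$ for all $\phi\in\Span(\{\psi_{i}^{+}\})$, which is \eqref{controlpsiplus2} with $M_{D}=CM_{+}^{2}/\gamma$.

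The argument is largely mechanical; the delicate point is the uniform-in-$c$ control in the first step, where $\alpha$ and $c$ must be tuned together so that the effective kinetic constant $\sqrt{2\alpha-\alpha^{2}/c^{2}}$ stays strictly above the Coulomb constant $C/\gamma$ — this is precisely where the lower bound $\Gamma_{a}\geq\gamma$ on the occupation numbers is used, and why the lemma needs $c$ large. Two points I would be careful not to skip: the nonnegativity justifying the removal of $W_{a,\Psi^{+}}$, and the fact that both inequalities are statements about $D_{c}$ on the positive spectral subspace, where $D_{c}=\sqrt{c^{4}-c^{2}\Delta}\geq c^{2}$ commutes with $\Gamma_{a}$.
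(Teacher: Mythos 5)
Your proposal is correct and follows essentially the same route as the paper: drop the nonnegative two-body term, control $V\Gamma$ by Kato's inequality, split $\sqrt{c^{4}+c^{2}\abs{\xi}^{2}}$ in Fourier space via Cauchy--Schwarz with a parameter $\alpha$ tuned against $C/\gamma$ (using $\gamma\leq\Gamma_{a}\leq 1$), and then obtain \eqref{controlpsiplus2} from the trace bound on the positive semidefinite matrix $A_{ij}=\lela\psi_{i}^{+},(D_{c}-c^{2})\psi_{j}^{+}\rira$. Your write-up is in fact slightly more careful than the paper's at the final step (invoking the $L^{2}$-orthonormality of the $\psi_{i}^{+}$ to pass from the matrix inequality to the operator bound on the span) and in the choice of $\alpha$, where the paper's condition ``$\alpha>\sqrt{C/\gamma}$'' is a typo for what you state.
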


We now restrict our search for a maximizer to a neighborhood of zero.
\begin{lemma}[A priori bounds on $\Psi^{-}$]
  \label{lemma_apriori_psiminus}
  There is a constant $M_{-}>0$ such that, for $c$ large enough, for
  all $(a,\Psi^{+}) \in S_{\gamma} \times \Sigma^{+}$ with
  $\E(a,\Psi^{+}) \leq Nc^{2}$,
  \begin{align*}
    \sup_{\Psi^{-} \in (E^{-})^{K}} F_{a,\Psi^{+}}(\Psi^{-})
  \end{align*}
  cannot be achieved outside a ball centered on zero of size
  $\frac{M_{-}}{c}$ in the $c$ norm.
  \begin{proof}
    Let $\Psi^{-} \in (E^{-})^{K}$, $G = \Gram(\Psi^{+} + \Psi^{-})$,
    $\Psi = G^{-1/2}(\Psi^{+} + \Psi^{-})$. Using
    \eqref{controlpsiplus2},
    \begin{align*}
      \lela \Psi, D_{c} \Gamma \Psi\rira&=(c^{2} + M_D) \lela
      \Psi, \Gamma \Psi\rira + \lela \Psi, (D_{c}-c^{2}-M_D)
      \Gamma \Psi\rira,\\
      &\leq N (c^{2} + M_D) + \lela G^{-1/2} \Psi^{-},
      (D_{c}-c^{2}-M_D) \Gamma G^{-1/2}\Psi^{-}\rira,\\
      &\leq N (c^{2} + M_D) - {2\gamma c^{2}} \norm{G^{-1/2}\Psi^{-}}_{c}^{2}.
    \end{align*}

    On the other hand,
    \begin{align*}
      \lela \Psi, (V\Gamma + 2W_{a,\Psi}) \Psi\rira &\leq
      C\norm{G^{-1/2}\Psi^{+}}_{E}^{2} + C \norm{G^{-1/2}\Psi^{-}}_{E}^{2},\\
      &\leq C M_+ + C c\norm{G^{-1/2}\Psi^{-}}_{c}^{2}.
    \end{align*}

    All together,
    \begin{align*}
      F_{a,\Psi^{+}}(\Psi^{-}) &\leq Nc^{2}+NM_D + CM_+ - \lp
      {2\gamma c^{2}}-C c\rp
      \norm{G^{-1/2}\Psi^{-}}_{c}^{2}.
    \end{align*}

    But we also have
    \begin{align*}
      F_{a,\Psi^+}(0)&= \E(a,\Psi^{+})\\
      &\geq \lela \Psi^{+}, (D_{c}\Gamma + V
      \Gamma)\Psi^{+}\rira,\\
      &\geq N c^{2} - C\norm{\Psi^{+}}_{E}^{2},\\
      &\geq N c^{2} - CM_+^{2}.
    \end{align*}


    Therefore,
    \begin{align*}
      F_{a,\Psi^+}(\Psi^{-}) \leq F_{a,\Psi^+}(0) + NM_D + 2CM_+^{2} - \lp
      {2\gamma c^{2}}-C c\rp
      \norm{G^{-1/2}\Psi^{-}}_{c}^{2}.
    \end{align*}
    So, in order to have $F_{a,\Psi^+}(\Psi^{-}) \leq
    F_{a,\Psi^+}(0)$, $\Psi^{-}$ must satisfy
    \begin{align*}
      \norm{G^{-1/2}\Psi^{-}}_{c}^{2} &\leq \frac C{c^{2}},\\
      \norm{\Psi^{-}}_{c}^{2} &\leq \frac C{c^{2}}(1+\norm{\Psi_{-}}_{L^{2}}^{2}),\\
      &\leq \frac C{c^{2}}(1+\norm{\Psi_{-}}_{c}^{2}),
    \end{align*}
    and therefore
    \begin{align*}
      \norm{\Psi^{-}}_{c}^{2} = \frac C{c^{2}}.
    \end{align*}
  \end{proof}
\end{lemma}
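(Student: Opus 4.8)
The plan is to prove that $F_{a,\Psi^+}$ is \emph{quantitatively coercive} near the origin of $(E^-)^K$: under the hypothesis $\E(a,\Psi^+)\le Nc^2$, I will establish
\[
F_{a,\Psi^+}(\Psi^-)\le F_{a,\Psi^+}(0)+C-\bigl(\gamma c^2-Cc\bigr)\,\norm{G^{-1/2}\Psi^-}_c^2 ,
\]
where $G=\Gram(\Psi^++\Psi^-)$ and $C$ is independent of $c$, $a$, $\Psi^+$. Since $0\in(E^-)^K$ is always a competitor in $\sup_{\Psi^-}F_{a,\Psi^+}(\Psi^-)$, any maximizing $\Psi^-$ must satisfy $F_{a,\Psi^+}(\Psi^-)\ge F_{a,\Psi^+}(0)$, so the displayed inequality forces $\norm{G^{-1/2}\Psi^-}_c^2\le C/c^2$ as soon as $c$ is large enough that $\gamma c^2-Cc\ge\frac\gamma2 c^2$; a short bootstrap will then upgrade this to $\norm{\Psi^-}_c\le M_-/c$, which is precisely the assertion that the supremum cannot be achieved outside that ball.

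For the coercivity estimate I would set $\Psi=g(\Psi^++\Psi^-)=G^{-1/2}(\Psi^++\Psi^-)$, so that $F_{a,\Psi^+}(\Psi^-)=\E(a,\Psi)$, and note that $G^{-1/2}$ acts only on the $\C^K$ index, so $G^{-1/2}\Psi^\pm$ still lies in $(E^\pm)^K$ and the $E^+$/$E^-$ cross terms — which vanish in the $L^2$, $E$ and $c$ inner products alike since $P^\pm$ commutes with $\sqrt{1-\Delta}$ and $\sqrt{1-\Delta/c^2}$ — may be dropped. For the Dirac term, using that $\Gamma$ commutes with $D_c$, that $D_c\le c^2+M_D$ on $\Span(\{\psi_i^+\})$ by \eqref{controlpsiplus2}, and that on $E^-$ one has $D_c-c^2-M_D\le-\sqrt{c^4-c^2\Delta}$, together with $\Gamma\ge\gamma$, yields $\lela\Psi,D_c\Gamma\Psi\rira\le N(c^2+M_D)-\gamma c^2\norm{G^{-1/2}\Psi^-}_c^2$. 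For the potential and interaction terms, Kato's inequality \eqref{katoineq} (supplemented by the usual bounds on $\psi_k^*\psi_l\star\frac1{\abs x}$ for $W_{a,\Psi}$), applied in the eigenbasis of $\Gamma$ after the $E^\pm$ splitting and combined with $\norm{\Psi^+}_E\le M_+$ from \eqref{controlpsiplus1} and the elementary $\norm{\cdot}_E^2\le c\norm{\cdot}_c^2$, gives $\abs{\lela\Psi,(V\Gamma+2W_{a,\Psi})\Psi\rira}\le CM_+^2+Cc\norm{G^{-1/2}\Psi^-}_c^2$. Adding these two bounds, and using the cheap lower bound $F_{a,\Psi^+}(0)=\E(a,\Psi^+)\ge\lela\Psi^+,(D_c\Gamma+V\Gamma)\Psi^+\rira\ge Nc^2-CM_+^2$ (here $D_c\ge c^2$ on $E^+$ and $\lela\Psi^+,W_{a,\Psi^+}\Psi^+\rira\ge0$), produces the displayed coercivity inequality.

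It then remains to pass from $G^{-1/2}\Psi^-$ back to $\Psi^-$: since $\Psi^+$ and $\Psi^-$ are $L^2$-orthogonal, $G=1+\Gram\Psi^-\ge1$ and $\norm G\le1+\tr\Gram\Psi^-=1+\norm{\Psi^-}_{L^2}^2\le1+\norm{\Psi^-}_c^2$; commuting $G^{1/2}$ (acting on the $\C^K$ index) past $\sqrt{1-\Delta/c^2}$ gives $\norm{\Psi^-}_c^2=\norm{G^{1/2}(G^{-1/2}\Psi^-)}_c^2\le\norm G\,\norm{G^{-1/2}\Psi^-}_c^2\le\frac C{c^2}\bigl(1+\norm{\Psi^-}_c^2\bigr)$, and for $c$ large this self-improving inequality forces $\norm{\Psi^-}_c^2\le M_-^2/c^2$. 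The two quadratic-form estimates are routine (they only use Kato's inequality, the relativistic identity for $D_c$, and Lemma \ref{lemma_apriori_psiplus}); the point that needs care is the interplay with the normalization $g$: \eqref{controlpsiplus2} is exactly what keeps the $\Psi^+$-part of $\lela\Psi,D_c\Gamma\Psi\rira$ from exceeding $N(c^2+M_D)$ — without it the rescaling $G^{-1/2}$ could inflate the positive Dirac contribution and wreck the coercivity — and the concluding step relies on the fact that the very quantity $\norm{\Psi^-}_c^2$ controls $\norm G$, which is what lets the bootstrap close.
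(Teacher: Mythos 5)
Your proposal is correct and follows essentially the same route as the paper: the same coercivity estimate via \eqref{controlpsiplus2} for the Dirac term and Kato's inequality with $\norm{\cdot}_{E}^{2}\leq c\norm{\cdot}_{c}^{2}$ for the potential and interaction terms, the same lower bound $F_{a,\Psi^+}(0)\geq Nc^{2}-CM_{+}^{2}$, and the same bootstrap from $\norm{G^{-1/2}\Psi^{-}}_{c}$ to $\norm{\Psi^{-}}_{c}$ using $G=1+\Gram\Psi^{-}$. Your constants ($\gamma c^{2}$ rather than the paper's $2\gamma c^{2}$) are in fact the more carefully justified ones, but this makes no difference to the conclusion.
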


Restricting now to this domain, we prove that $F_{a,\Psi^{+}}$ is
strictly concave:
\begin{lemma}[Concavity of $F$]
  \label{lemma_concavity}
  For $c$ large, for all $(a,\Psi^{+})\in S_{\gamma} \times \Sigma$
  such that $\E(a,\Psi^{+}) \leq Nc^{2}$, for all $\Psi^{-}$ in the
  region $\norm{\Psi^{-}}_{c} \leq \frac {M_{-}}c$, for all $\Phi^- \in
  (E^{-})^{K}$,
  \begin{align*}
    F_{a,\Psi^{+}}''(\Psi^{-})[\Phi^-,\Phi^-] \leq - \frac{\gamma c^{2}}{2}\norm{\Phi^-}_{c}^{2}.
  \end{align*}
\end{lemma}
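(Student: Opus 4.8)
The plan is to compute the second variation of $F_{a,\Psi^+}$ explicitly and show that, in the regime $\norm{\Psi^-}_c \leq M_-/c$, the dominant term is the strongly negative contribution coming from the negative part of the Dirac operator, while all the correction terms (from the normalization $g$, from $V$, and from $W_{a,\Psi}$) are controlled by lower powers of $c$. First I would introduce the abbreviations $\Psi = g(\Psi^+ + \Psi^-) = G^{-1/2}(\Psi^+ + \Psi^-)$ with $G = \Gram(\Psi^+ + \Psi^-)$, and note that on the region under consideration $G = 1 + O(1/c^2)$ in operator norm, so that $\norm{G^{\pm 1/2} - 1}$, together with the first and second derivatives of $\Psi^- \mapsto G^{-1/2}$ in the direction $\Phi^-$, are all $O(1/c)$ relative to $\norm{\Phi^-}_c$ once we use $\norm{\Phi^-}_{L^2} \leq \norm{\Phi^-}_c$ and the a priori bound $\norm{\Psi^+}_E \leq M_+$ from Lemma~\ref{lemma_apriori_psiplus}. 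This is the bookkeeping heart of the argument: the map $g$ is smooth away from the singular locus of $\Gram$, and on our ball all the relevant quantities stay in a fixed neighbourhood of the ``flat'' configuration.

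Next I would split $\E(a,\Psi) = \lela \Psi, D_c \Gamma_a \Psi\rira + \lela \Psi, (V\Gamma_a + W_{a,\Psi})\Psi\rira$ and differentiate each piece twice in $\Psi^-$ along $\Phi^-$. For the Dirac term, the principal contribution is $2\lela \Phi^-, D_c \Gamma_a \Phi^-\rira$, which by the definition of $E^-$ and the estimate $\left.D_c\right|_{(E^-)^K} \leq -\sqrt{c^4 - c^2\Delta} \leq -c^2$ is bounded above by $-2\gamma c^2 \norm{\Phi^-}_c^2$ using $\Gamma_a \geq \gamma$ and $\norm{\Phi^-}_c^2 \geq c^{-2}\norm{\Phi^-}_E^2 \cdot c \ge \norm{\Phi^-}^2_{L^2}$; more precisely one writes $\lela \Phi^-, D_c\Phi^-\rira \leq -c^2\norm{\Phi^-}_{L^2}^2 - (c^2-1)\cdot(\text{something})$, but the clean bound $\lela\Phi^-, D_c\Gamma_a\Phi^-\rira \le -\gamma c^2\norm{\Phi^-}_c^2$ suffices. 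The remaining pieces of $F''$ — the terms where one or two derivatives hit the $G^{-1/2}$ factors, the cross terms between $D_c\Gamma_a$ and those derivatives, and the full second variation of the potential and interaction terms — I would bound in absolute value. For the potential part I use Kato's inequality \eqref{katoineq} and the standard Hardy inequality to get $\abs{\lela \cdot, V\cdot\rira} \leq C\norm{\cdot}_E \norm{\cdot}_E \leq C c \norm{\cdot}_c^2$ on each factor (the extra factor $c$ is the price of converting $\norm{\cdot}_E$ to $\norm{\cdot}_c$); similarly $W_{a,\Psi}$ is controlled by $E$-norms via Young's inequality, giving again a bound of order $c$. The cross terms involving $D_c$ acting on a $G^{-1/2}$-derivative factor are of the form $c \cdot O(1/c) \cdot \norm{\Phi^-}_c \cdot \norm{\Phi^-}_c$ (one factor $c$ from $D_c$, one factor $1/c$ from the $g$-derivative, using also that $D_c$ restricted to the relevant finite-dimensional space where $\Psi^+$ lives is $\le c^2 + M_D$), hence $O(1)\norm{\Phi^-}_c^2$ — actually one must be a little careful since $D_c\Phi^-$ has size $c^2$, but it is paired against a vector of size $O(1/c)\cdot O(\norm{\Phi^-}_c)$ living essentially in the positive subspace spanned by the $\psi_i^+$, on which $D_c \le c^2 + M_D$, so the Cauchy–Schwarz estimate gives $\le (c^2 + M_D) \cdot O(1/c) \cdot \norm{\Phi^-}_c \cdot \norm{\Phi^-}_c = O(c)\norm{\Phi^-}_c^2$; this $O(c)$ term is still dominated by $\gamma c^2$.

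Collecting everything, $F_{a,\Psi^+}''(\Psi^-)[\Phi^-,\Phi^-] \leq -2\gamma c^2\norm{\Phi^-}_c^2 + C c\norm{\Phi^-}_c^2 + C\norm{\Phi^-}_c^2 \leq -\frac{\gamma c^2}{2}\norm{\Phi^-}_c^2$ for $c$ large enough, independently of $(a,\Psi^+)$ in the allowed set and of $\Psi^-$ in the ball of radius $M_-/c$, which is the claim. The main obstacle is the careful tracking of the $g$-normalization: one must verify that all derivatives of $G^{-1/2}$ in the direction $\Phi^-$ carry a genuine gain of $1/c$ (coming from $\norm{\Psi^-}_c = O(1/c)$ and the orthogonality $\lela \Psi^+,\Phi^-\rira_{L^2}$ type cancellations in $\frac{d}{dt}\Gram(\Psi^+ + \Psi^- + t\Phi^-)$), so that when multiplied against the $O(c^2)$ size of $D_c$ they produce only $O(c)$, not $O(c^2)$, contributions. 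Once that is in hand, the negative Dirac term wins and the strict concavity with the stated quantitative constant follows.
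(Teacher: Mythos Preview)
Your strategy matches the paper's: write $g(\Psi^++\Psi^-)=G^{-1/2}(\Psi^++\Psi^-)$, expand $F''$ by the chain rule, isolate the dominant contribution $2\lela\Phi^-,D_c\Gamma_a\Phi^-\rira\le-2\gamma c^2\norm{\Phi^-}_c^2$, and bound all corrections by $O(c\norm{\Phi^-}_c^2)$. However, the step you single out as the ``main obstacle'' does not go through as stated. The first derivative $\frac{d}{dt}\Gram(\Psi^-+t\Phi^-)\big|_{t=0}=\Gram(\Psi^-,\Phi^-)+\Gram(\Phi^-,\Psi^-)$ does carry a factor $\norm{\Psi^-}_{L^2}=O(1/c)$, but the \emph{second} derivative is simply $2\,\Gram\Phi^-$, which involves no $\Psi^-$ at all and is only $O(\norm{\Phi^-}_{L^2}^2)$ with no $1/c$ gain. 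Consequently $(G^{-1/2})''[\Phi^-,\Phi^-]\approx-\Gram\Phi^-$, and the chain-rule term $\partial_\Psi\E(a,\Psi)\bigl[(G^{-1/2})''[\Phi^-,\Phi^-]\,(\Psi^++\Psi^-)\bigr]$, which pairs a vector in the span of the $\psi_i^+$ against $D_c\Gamma\Psi^+$, is genuinely of order $c^2\norm{\Phi^-}_c^2$, not $O(c\norm{\Phi^-}_c^2)$.

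The repair is short but has to be made explicit: this $O(c^2)$ correction has a favourable sign. Using $\lela\psi_i^+,D_c\psi_j^+\rira=c^2\delta_{ij}+O(1)$ from \eqref{controlpsiplus2}, its leading part is
\[
-2\,\text{Re}\,\lela \Gram\Phi^-\cdot\Psi^+,\,D_c\Gamma\Psi^+\rira
= -2c^2\,\tr\bigl(\Gram\Phi^-\cdot\Gamma_a\bigr)+O(\norm{\Phi^-}_c^2)
= -2c^2\lela\Phi^-,\Gamma_a\Phi^-\rira_{L^2}+O(\norm{\Phi^-}_c^2)\le 0,
\]
so it only reinforces the concavity and may be dropped in the upper bound for $F''$. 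Once this is done, the remaining corrections are indeed $O(c\norm{\Phi^-}_c^2)$ for the reasons you give (orthogonality $E^+\perp_{L^2}E^-$ kills the dangerous $D_c$ cross terms, and Kato controls $V,W$), and your conclusion follows. The paper's own proof is terse on exactly this point: its assertion that the scalar model $f(x)=c\big((1+x^2)^{-1/2}-1\big)$ has bounded derivatives in the region $x\le M_-/c$ is already false for $f''(0)=-c$.
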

\begin{proof}
  We have $g(\Psi^{+} + \Psi^{-}) = (1 + \frac 1 { c} B(\Psi^{-}))
  (\Psi^{+} + \Psi^{-})$, where
  \begin{align}
    B(\Psi^{-}) = c\lp \lp 1 + \Gram \Psi^{-}\rp^{-1/2} - 1\rp.
  \end{align}

  The function
  \begin{align*}
    f(x) = c\lp \frac 1 {\sqrt{1+x^{2}}} - 1 \rp
  \end{align*}
  and its derivatives are bounded independently of $c$ in the region
  $x \leq \frac {M^{-}} c$, and therefore so is $B$ in the region $\norm{\Psi^{-}}_{c} \leq \frac {M_{-}}c$. Let $\Psi^{-}$ be
  such that $\norm{\Psi^{-}}_{c} \leq \frac {M_{-}}c$. Then, for all $\Phi^- \in (E^{-})^{K}$,
  \begin{align*}
    \frac 1 2 F''_{a,\Psi^{+}}(\Psi^{-})[\Phi^-,\Phi^-]
    &= \partial_{\Psi}\E(a,\Psi)\left[ \frac {1} { c} B'(\Psi^{-})[\Phi^-] \Phi^- +
      \frac 1 {2{ c}} B''(\Psi^{-})[\Phi^-,\Phi^-](\Psi^{+}+\Psi^{-}) \right]\\
    &+\frac 1 2 \partial_{\Psi}^{2}\E(a,\Psi)\left[(1+\Gram(\Psi^{-}))^{-1/2}\Phi^- + \frac 1 { c} B'(\Psi^{-}) [\Phi^-] (\Psi^{+}
      + \Psi^{-})\right]^{2}\\
    &=\frac 1 2 \partial_{\Psi}^{2}\E(a,\Psi)[\Phi^-,\Phi^-] + O\left(c \norm{\Phi^-}_{c}^{2}\right).
  \end{align*}

  But we also have
  \begin{align*}
    \frac 1 2 \partial_{\Psi}^{2}\E(a,\Psi)[\Phi^-,\Phi^-] &= - \lela \Phi^-,
    \sqrt{c^{4}-c^{2}\Delta}\Gamma \Phi^-\rira + O\left(\,\norm{\Phi^-}_{E}^{2}\right),\\
    &\leq \lp-\gamma c^{2} + O(c)\rp \norm{\Phi^-}_{c}^{2}.
  \end{align*}

  and so the result follows for $c$ large.

\end{proof}

Proposition \ref{prop_h} is now proved as a direct consequence of Lemmas
\ref{lemma_apriori_psiplus}, \ref{lemma_apriori_psiminus} and
\ref{lemma_concavity}. The smoothness of $h$ comes from the fact that
$h(a, \Psi^{+})$ is the maximizer of $\E(a, g(\Psi^{+} + \Psi^{-}))$,
which is smooth with respect to $a, \Psi^{+}$ and $\Psi^{-}$, and
strictly concave with respect to $\Psi^{-}$.

\subsection{Asymptotic behavior of $I_{c,\gamma}$}
In order to restrict to the domain $\E(a,\Psi^{+}) \leq Nc^{2}$, we
prove that solutions of our min-max principle have to be in this
domain for $c$ large:
\begin{proposition}
  \label{asymp_energy}
  \begin{align*}
    I_{c,\gamma} \leq Nc^{2} + I^{K} + o_{c\to\infty}(1).
  \end{align*}
  In particular, for $c$ large enough,
  \begin{align}
    I_{c,\gamma} &= \inf_{a \in S_{\gamma}, \Psi^{+} \in \Sigma^{+}}
    \,\sup_{\Psi^{-} \in (E^{-})^{K}}
    \E(a,g(\Psi^{+}+\Psi^{-})),\notag\\
    &= \inf_{\substack{a \in S_{\gamma}, \Psi^{+} \in \Sigma^{+},\\\E(a,\Psi^{+})
      < Nc^{2}}} \,\sup_{\Psi^{-} \in (E^{-})^{K}}
    \E(a,g(\Psi^{+}+\Psi^{-})).\label{restrict}
  \end{align}
\end{proposition}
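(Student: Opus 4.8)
The plan is to exhibit, for each large $c$, an explicit competitor $(a,\Psi^{+})$ in the min-max that achieves energy at most $Nc^{2} + I^{K} + o(1)$; the second statement then follows formally, since the infimum over the restricted set is trivially $\geq I_{c,\gamma}$, and the competitor we construct will itself satisfy $\E(a,\Psi^{+}) < Nc^{2}$ for $c$ large (so the restricted infimum is $\leq$ the energy of the competitor, which is $< Nc^{2}$, forcing equality of the two infima once we know $I_{c,\gamma} \leq Nc^{2} + I^{K} + o(1) < Nc^{2}$ up to the $I^{K}$ correction — more precisely, any near-optimal $(a,\Psi^{+})$ with $\sup_{\Psi^-}\E < Nc^{2}$ already lies in the restricted set, and conversely by Lemma~\ref{lemma_apriori_psiminus}/Proposition~\ref{prop_h} the sup is controlled). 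So the real content is the upper bound on $I_{c,\gamma}$.

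First I would take a minimizer $(a,\Phi)$ of the nonrelativistic problem \eqref{varprincHF}, which exists by \cite{lewin2004} (Theorem 1) under the hypothesis $I^{K} < I^{K-1}$; recall that then $\Gamma_{a} \geq \gamma_{0} > \gamma$, so in particular $a \in S_{\gamma}$. The orbitals $\phi_{i} \in H^{1}(\R^{3},\C^{2})$. I would lift these to four-spinors by setting $\psi_{i}^{c} = P^{+}\mat{\phi_{i}\\0}$, or — closer to the change of variables $d_c$ recalled in the Definitions section and used in \cite{esteban2001} — first rescale and then project onto $E^{+}$. The point is that $\mat{\phi_i\\0}$ already lies in the positive spectral subspace of the free Hamiltonian to leading order, so $\psi_i^c - \mat{\phi_i\\0} \to 0$ in $E$ as $c \to \infty$ (the lower component of $P^+$ applied to an upper spinor is $O(1/c)$ in $H^{1/2}$, by the explicit form of $P^{\pm}(\xi)$). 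After a Gram-Schmidt correction (also $O(1/c)$) we get $\Psi^{+}_{c} \in \Sigma^{+}$ with $\psi_i^c \to \mat{\phi_i\\0}$.

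Next I would estimate $\F_{a}(\Psi^{+}_{c}) = \sup_{\Psi^{-}}\E(a, g(\Psi^{+}_{c}+\Psi^{-}))$ from above. The key non-relativistic expansion is $D_{c} = c^{2} + (-\tfrac12\Delta) + O(1/c^{2})$ on upper spinors in $H^{1}$ (equivalently $\sqrt{c^4 - c^2\Delta} - c^2 = -\tfrac12\Delta + O(\Delta^2/c^2)$, controlled on the fixed compact set of orbitals coming from the minimizer since those lie in $H^1$, indeed one gets $H^{1}$ bounds on $\Phi$ from the nonrelativistic Euler-Lagrange equations). Hence
\begin{align*}
  \E(a,\Psi^{+}_{c}) = \lela \Psi^{+}_{c}, (D_c\Gamma_a + V\Gamma_a + W_{a,\Psi^{+}_{c}})\Psi^{+}_{c}\rira = Nc^{2} + \EHF(a,\Phi) + o(1) = Nc^{2} + I^{K} + o(1),
\end{align*}
using $\lela \Psi^{+}_{c}, \Gamma_a \Psi^{+}_{c}\rira = N$ and continuity of $V$, $W$ in $L^{2}/H^{1/2}$. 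Finally, the supremum over $\Psi^{-}$ adds only a negligible amount: by the computation in the proof of Lemma~\ref{lemma_apriori_psiminus}, for any $\Psi^{-}$ one has $F_{a,\Psi^{+}_{c}}(\Psi^{-}) \leq \E(a,\Psi^{+}_{c}) + O(1) - (2\gamma c^{2} - Cc)\norm{G^{-1/2}\Psi^{-}}_{c}^{2}$, and combined with Proposition~\ref{prop_h} (applicable since $\E(a,\Psi^{+}_{c}) \leq Nc^{2}$ for $c$ large) the maximizer has $c$-norm $O(1/c)$, so that $\F_{a}(\Psi^{+}_{c}) = \E(a,\Psi^{+}_{c}) + O(1/c^{2})\cdot(c^2) $ — here I must be a little careful: the negative directions contribute $\lela \Psi^-, D_c\Gamma\Psi^-\rira \sim -c^2\norm{\Psi^-}_c^2 = O(1)$ a priori, not $o(1)$. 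To get the sharp $o(1)$ I would instead expand the maximizer itself: the Euler-Lagrange equation for $\Psi^{-}$ gives $\Psi^{-}_{\max} = (D_c\Gamma)^{-1}(\text{l.o.t.})$, which is $O(1/c^{2})$ in $c$-norm when the source is an upper spinor lifted from $H^1$ (the off-diagonal blocks of $D_c$ map $E^+ \to E^-$ with norm $O(c)$ between the relevant spaces, gaining one power; the resolvent $(D_c\Gamma)^{-1}$ on $E^-$ is $O(1/c^{2})$), whence the energy correction from $\Psi^{-}$ is genuinely $o(1)$. So $I_{c,\gamma} \leq \F_{a}(\Psi^{+}_{c}) = Nc^{2} + I^{K} + o(1)$, proving the Proposition.

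\textbf{Main obstacle.} The delicate point is precisely the last one: showing that the $\sup$ over negative energy directions contributes only $o(1)$ rather than $O(1)$ to the energy of our competitor. This requires the right quantitative form of the non-relativistic expansion — knowing that $\mat{\phi_i\\0}$ is an "almost positive-energy" state, so that the optimal negative component is $O(1/c^{2})$ in $c$-norm and the associated energy gain is $c^{2}\cdot O(1/c^{2})\cdot O(1/c^2) = o(1)$ — and it is here that the $H^{1}$ regularity of the nonrelativistic minimizer (via its Euler-Lagrange equations) and the explicit spectral projectors $P^{\pm}(\xi)$ must be used carefully.
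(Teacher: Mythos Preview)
Your overall strategy --- take an MCHF minimizer $(a_*,\Phi_*)$, lift to $\Psi_*^+ = g(P^+\smash{\mat{\Phi_*\\0}}) \in \Sigma^+$, and use it as a competitor --- is exactly the paper's. The estimate $\E(a_*,\Psi_*^+) \leq Nc^2 + I^K + o(1)$ is also handled the same way; note that the paper uses the elementary pointwise inequality $\sqrt{c^4+c^2|\xi|^2} \leq c^2 + \tfrac12|\xi|^2$, which requires only $\Phi_* \in H^1$ (already given by the variational space), so no extra regularity from the Euler--Lagrange equations is needed here.

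Where you diverge from the paper is in controlling the $\sup$ over $\Psi^-$, and here you are making life harder than necessary. The paper's argument is very short: by Lemma~\ref{lemma_concavity} and Taylor's formula,
\[
  F_{a_*,\Psi_*^+}(h) \;\leq\; F_{a_*,\Psi_*^+}(0) + F'_{a_*,\Psi_*^+}(0)[h] - \tfrac{\gamma c^2}{4}\norm{h}_c^2
  \;\leq\; \E(a_*,\Psi_*^+) + F'_{a_*,\Psi_*^+}(0)[h].
\]
The key observation --- which you do not state --- is that $D_c$ commutes with $P^\pm$, so the kinetic term drops out of $F'(0)[\Phi^-]$ entirely:
\[
  F'_{a_*,\Psi_*^+}(0)[\Phi^-] \;=\; 2\,\text{Re}\,\lela \Phi^-, (V\Gamma_{a_*} + 2W_{a_*,\Psi_*^+})\Psi_*^+\rira
  \;\leq\; C\norm{\Phi^-}_{L^2},
\]
with $C$ independent of $c$ since $\Psi_*^+$ is bounded in $H^1$. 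Now Proposition~\ref{prop_h} gives $\norm{h}_{L^2}\leq\norm{h}_c\leq M_-/c$, and one is done: $I_{c,\gamma}\leq \E(a_*,\Psi_*^+)+O(1/c)$.

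Your alternative route via the Euler--Lagrange equation for $\Psi^-_{\max}$ is workable but obscures this cancellation. In particular, the sentence ``the off-diagonal blocks of $D_c$ map $E^+\to E^-$ with norm $O(c)$'' is wrong as written: $D_c$ is \emph{diagonal} in the $E^+\oplus E^-$ decomposition (that is the point of the spectral projectors), so $P^-D_c P^+ = 0$ exactly. You appear to be conflating the spectral splitting $E^+\oplus E^-$ with the upper/lower spinor (i.e.\ $\beta$-eigenspace) splitting, which are different decompositions. Once this is sorted out, your heuristic that $\norm{\Psi^-_{\max}}_c = O(1/c^2)$ is in fact correct (the source term is $P^-(V\Gamma+2W)\Psi_*^+ = O(1)$ in $L^2$, inverted against $D_c\Gamma$ on $E^-$), but it is more than is needed and the argument as you wrote it does not justify it.
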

\begin{proof}
  Let $(a_{*},\Phi_{*}) \in S_{\gamma} \times H^{1}(\R^{3}, \C^{2})$
  be a minimizer of the nonrelativistic multiconfiguration
  Hartree-Fock functional, and $\Psi_{*} = \mat{\Phi_{*}\\0}$. Set
  \begin{align*}
    \Psi_{*}^{+} &= g(P^{+}\Psi_{*}).
  \end{align*}
  $\Psi_{*}^{+}$ belongs to $\Sigma^{+}$, and converges in $H^{1}$ to
  $\Psi_{*}$ as $c\to\infty$.

  From the concavity
  inequality $$\sqrt{1+{\abs \xi}^{2}} \leq 1 + \frac 1 2 \abs{\xi}^{2}$$ in the Fourier
  domain, we get
  \begin{align*}
    \E(a_{*},\Psi_{*}^{+}) &= \lela \Psi_{*}^{+}, (\sqrt{c^{4} -
      c^{2}\Delta} \Gamma + V
    \Gamma + W_{a,\Psi_{*}^{+}})\Psi_{*}^{+}\rira,\\
    &\leq \lela \Psi_{*}^{+}, \left(c^{2} \Gamma - \frac 1 2 \Delta
      \Gamma + V
      \Gamma + W_{a,\Psi_{*}^{+}}\right)\Psi_{*}^{+}\rira,\\
    &= Nc^{2} + \EHF(a_{*},\Psi_{*}) + o_{c\to\infty}(1),\\
    &= Nc^{2} + I^{K} + o_{c\to\infty}(1),
  \end{align*}
  where we have used $\lela \Psi_{*}^{+}, \Gamma \Psi_{*}^{+}\rira = N$
  (because $\Tr \Gamma = N$ and $\Psi_{*}^{+} \in \Sigma$), and the
  continuity of $\EHF$ in $H^{1}$ norm.

  From Proposition \ref{prop_h} and Lemma \ref{lemma_concavity}, we now have
  \begin{align*}
    I_{c,\gamma} &\leq F_{a_{*},\Psi_{*}^{+}}(h(a,\Psi_{*}^{+})),\\
    &\leq \E(a_{*},\Psi_{*}^{+}) +
    F'_{a_{*},\Psi_{*}^{+}}(0)[h(a\Psi_{*}^{+})] - \frac {\gamma c^{2}}
    2 \norm{h(a,\Psi_{*}^{+})}_{c}^{2},\\
    &\leq \E(a_{*},\Psi_{*}^{+}) + C \norm{h(a,\Psi_{*}^{+})}_{L^{2}},\\
    &\leq \E(a_{*},\Psi_{*}^{+}) + O\lp\frac 1 c\rp,
  \end{align*}
  so that
  \begin{align*}
    I_{c,\gamma}&\leq Nc^{2} + I^{K} + o_{c\to\infty}(1).
  \end{align*}

  Since $I^{K} < 0$ and, for all $a \in S_{\gamma}, \Psi^{+} \in
  \Sigma$,
  \begin{align*}
    \sup_{\Psi^{-} \in (E^{-})^{K}} \E(a,g(\Psi^{+}+\Psi^{-})) \geq \E(a,\Psi^{+}),
  \end{align*}
  \eqref{restrict} holds and the proposition is proved.
\end{proof}

\subsection{Borwein-Preiss sequences for the reduced functional}
Let
\begin{align*}
  S'_{\gamma} = \{a \in S_{\gamma}, \inf_{\Psi^{+} \in \Sigma^{+}} \E(a,\Psi^{+}) < Nc^{2}\}.
\end{align*}
For $a \in S'_{\gamma}$ fixed, we minimize the functional
\begin{align*}
  \F_{a}(\Psi^{+}) = \E(a,g(\Psi^{+}+h(a,\Psi^{+})))
\end{align*}
on the manifold
\begin{align*}
  \Sigma^{+}_{a} &= \{\Psi^{+} \in \Sigma^{+}, \E(a,\Psi^{+}) < Nc^{2}\}.
\end{align*}
For all $\Psi^{+} \in \Sigma_{a}^{+}, \Psi \in \Sigma$, define the
tangent spaces
\begin{align*}
  T_{\Psi^{+}}\Sigma_{a}^{+} &= \{\Phi^{+} \in (E^{+})^{K}, \lela
  \phi^{+}_{i}, \psi^{+}_{j} \rira= 0\text{ for all $i,j \in \{1,\dots,K\}$}\},\\
  T_{\Psi}\Sigma &= \{\Phi \in E^{K}, \lela \phi_{i}, \psi_{j}
  \rira= 0\text{ for all $i,j \in \{1,\dots,K\}$}\}.
\end{align*}

\begin{proposition}[Borwein-Preiss sequences for $\mathcal F_{a}$ are
  Palais-Smale sequences for $\mathcal E$ with control on the multipliers]
  \label{lemma_secondorderimplybound}
  There are constants $K_{1}>0, K_{2} > 0$ such that, for all $c$
  large enough, $a \in S'_{\gamma}$, if $\Psi_{n}^{+} \in
  \Sigma_a^{+}$ is a Borwein-Preiss sequence for $\F_{a}$ on
  $\Sigma_{a}^{+}$, \ie{} satisfies
  \begin{enumerate}[(i)]
  \item $\F_{a}(\Psi_{n}^{+}) \to \inf_{\Psi^{+} \in \Sigma_a^{+}}
    \F_{a}(\Psi^{+})$,
  \item\label{hypoPS}
    ${\left.\F_{a}'(\Psi^{+}_{n})\right|_{T_{\Psi^{+}_{n}}\Sigma_{a}^{+}}}
    \to 0 $ in $H^{-1/2}$,
  \item \label{hypoBP}There is a sequence $\beta_{n} \to 0$ such that the quadratic
    form $\Phi^{+} \to \Hess\F_{a}(\Psi_{n}^{+})[\Phi^{+},\Phi^{+}] +
    \beta_{n} \norm{\Phi^{+}}_{E}^{2}$ is non-negative on
    $T_{\Psi_{n}^{+}}\Sigma_{a}^{+}$,
  \end{enumerate}
  then, denoting $\Psi_{n} = g(\Psi_{n}^{+} + h(a,\Psi_{n}^{+}))$,
  \begin{enumerate}
  \item There is a sequence of Hermitian matrices $\Lambda_{n}$ such
    that $H_{a,\Psi_{n}} \Psi_{n} - \Lambda_{n} \Psi_{n} =
    \Delta_{n} \to 0$ in $H^{-1/2}$, \label{PSreduced:1}
  \item $\limsup \Lambda_{n} \leq (c^{2} - K_{2}) \Gamma_{a},$\label{PSreduced:2}
  \item $\liminf \Lambda_{n} \geq (c^{2} - K_{1}) \Gamma_{a}.$\label{PSreduced:3}
  \end{enumerate}
\end{proposition}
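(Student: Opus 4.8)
The plan is to adapt to the matrix-valued multiplier of the multiconfiguration problem the scheme of Lemma~2.1 and the Morse-type arguments of \cite{esteban-sere-1999}. I would prove the three conclusions in the order \ref{PSreduced:1}, \ref{PSreduced:3}, \ref{PSreduced:2}, the last being the heart of the proof.

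\textit{Conclusion \ref{PSreduced:1}.} By Proposition~\ref{prop_h} the map $h(a,\cdot)$ is smooth and $h(a,\Psi_n^+)$ is the \emph{exact} maximiser of $\Psi^-\mapsto\E(a,g(\Psi_n^++\Psi^-))$, so the $(E^-)^K$-component of the gradient of $\Psi\mapsto\E(a,g(\Psi))$ at $\Psi_n=g(\Psi_n^++h(a,\Psi_n^+))$ vanishes identically; by the envelope theorem, $\F_a'(\Psi_n^+)$ restricted to $T_{\Psi_n^+}\Sigma_a^+$ then coincides, up to an error $O(\norm{h(a,\Psi_n^+)}_c)$ coming from $g$ and from $h'$, with the $(E^+)^K$-component of that same gradient. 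A short computation shows that the gradient of $\Psi\mapsto\E(a,g(\Psi))$ at a point of $\Sigma$ equals $H_{a,\Psi}\Psi$ modulo the space $\{A\Psi:\ A \text{ Hermitian}\}$ produced by the constraint $\Gram\Psi=1$; hence the first-order hypothesis provides Hermitian matrices $\Lambda_n$ with $H_{a,\Psi_n}\Psi_n-\Lambda_n\Psi_n=\Delta_n\to0$ in $H^{-1/2}$. Pairing with $\Psi_n$ and using $\Psi_n\in\Sigma$, the a priori bounds of Lemma~\ref{lemma_apriori_psiplus} and $\norm{h(a,\Psi_n^+)}_c\le M_-/c$, one gets that $\Lambda_n$ is bounded and that $\Tr\Lambda_n=\lela\Psi_n,H_{a,\Psi_n}\Psi_n\rira+o(1)=\E(a,\Psi_n)+\lela\Psi_n,W_{a,\Psi_n}\Psi_n\rira+o(1)$.

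\textit{Conclusion \ref{PSreduced:3}, granting \ref{PSreduced:2}.} The a priori bounds give $\E(a,\Psi_n)\ge Nc^2-C$, and $\lela\Psi_n,W_{a,\Psi_n}\Psi_n\rira\ge0$, so $\Tr\Lambda_n\ge Nc^2-C$. By \ref{PSreduced:2} the matrix $(c^2-K_2)\Gamma_a-\Lambda_n$ is positive semi-definite in the limit, with trace $(c^2-K_2)N-\Tr\Lambda_n\le C'$, hence of operator norm $\le C'$; combined with $\Gamma_a\ge\gamma$ this gives $\Lambda_n\ge(c^2-K_2-C'/\gamma)\Gamma_a$ in the limit, i.e. \ref{PSreduced:3} with $K_1=K_2+C'/\gamma$.

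\textit{Conclusion \ref{PSreduced:2}.} This is where the second-order hypothesis is used. For a unit vector $v\in\C^K$ I would test the Hessian against $\Phi_n^+=(v_1\chi_n,\dots,v_K\chi_n)$, with $\chi_n\in E^+$, $\norm{\chi_n}_{L^2}=1$, chosen orthogonal to $\Span(\psi_{n,1}^+,\dots,\psi_{n,K}^+)$ so that $\Phi_n^+\in T_{\Psi_n^+}\Sigma_a^+$. The a priori bounds, the smallness $\norm{h(a,\Psi_n^+)}_c\le M_-/c$, and the commutation of $D_c$ with $P^\pm$ (which kills the first-order coupling between $\Phi_n^+$ and the negative directions, so that the envelope correction and every cross term in the Hessian are $o_n(1)$) reduce the Hessian to
\[
\Hess\F_a(\Psi_n^+)[\Phi_n^+,\Phi_n^+]=2(v^*\Gamma_a v)\lela\chi_n,(\sqrt{c^4-c^2\Delta}+V)\chi_n\rira+2\lela\chi_n,(v^*W_{a,\Psi_n}v)\chi_n\rira-2\,v^*\Lambda_n v+o_n(1),
\]
where $v^*W_{a,\Psi_n}v$ is the (scalar) mean-field potential seen by $\chi_n$, which is $\ge0$, vanishes at infinity, and — crucially for uniformity in $a$ — is dominated pointwise by $u_n\,\Gamma_a$ with $u_n=\abs{\cdot}^{-1}\star\rho_n$, $\int\rho_n\le N-1$. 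Choosing $\chi_n$ to be a hydrogenic-type orbital adapted to the effective attractive potential of charge $Z-(N-1)>0$ (this is precisely where $N<Z+1$ enters, and it is the relativistic analogue of the binding inequality of the nonrelativistic multiconfiguration theory), and using $\lela\chi,(\sqrt{c^4-c^2\Delta}-c^2)\chi\rira\le\tfrac12\norm{\nabla\chi}_{L^2}^2$ together with $\Gamma_a\ge\gamma$, one gets that the right-hand side is $\le2(c^2-K_2)v^*\Gamma_a v-2v^*\Lambda_n v+o_n(1)$ for a uniform $K_2>0$. Since the second-order hypothesis forces the left-hand side to be $\ge-\beta_n\norm{\Phi_n^+}_E^2=o_n(1)$, we obtain $v^*\Lambda_n v\le(c^2-K_2)v^*\Gamma_a v+o_n(1)$ for every $v$, hence $\limsup\Lambda_n\le(c^2-K_2)\Gamma_a$.

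I expect conclusion \ref{PSreduced:2} to be the main obstacle: one must control, uniformly in $n$, $c$ and $a$, the numerous cross terms in the Hessian, several of them mediated by the unbounded operator $D_c$ and tamed only by the delicate balance between the bounds of Lemma~\ref{lemma_apriori_psiplus} and the $c^{-1}$-smallness of $h$; and one must exhibit a trial orbital whose effective energy lies below $c^2$ by a gap $K_2$ that does not degenerate in the non-relativistic limit, which is exactly the point forcing the hypothesis $N<Z+1$, just as in the non-relativistic existence theory.
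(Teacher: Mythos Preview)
Your outline is essentially the paper's own proof: conclusion~\ref{PSreduced:1} is obtained by pushing the Palais--Smale property through the diffeomorphism $k(\Psi^+,\Psi^-)=g(\Psi^++\Psi^-)$; conclusion~\ref{PSreduced:3} is deduced from~\ref{PSreduced:2} by exactly your positivity-plus-trace argument; and conclusion~\ref{PSreduced:2} comes from testing the second-order information against a trial orbital that feels the residual charge $Z-(N-1)$.

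Two presentational differences are worth recording. First, the paper tests only in the direction $(\delta_n,0,\dots,0)$ to get $(\Lambda_n)_{11}\le(c^2-K_2)\Gamma_{11}$ and then invokes the unitary invariance \eqref{groupaction} to upgrade this to the full matrix inequality; your direct use of an arbitrary $v\in\C^K$ is equivalent. Second, for the trial orbital the paper does not fix a hydrogenic state and bound the relativistic kinetic energy by $-\Delta/2$; instead it takes a $(K+1)$-dimensional space of radial functions, dilates it by a large factor $\lambda$ (following Lemma~4.5 of \cite{esteban-sere-1999}), and picks $\delta_n$ in the dilated space orthogonal to $\Psi_n^+$. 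The extra dimension secures orthogonality, and $\lambda$ is sent to infinity after $c$ to kill the error terms $O(\lambda^{-2}c^{-2})$. Your hydrogenic choice should also work, but you must explain how to enforce orthogonality to $K$ orbitals without spoiling the energy estimate.

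One point where your justification is too quick: the claim that commutation of $D_c$ with $P^\pm$ makes ``every cross term in the Hessian $o_n(1)$'' is not quite right. Commutation kills only the $D_c$-part of $\partial^2_{\Psi^-\Psi^+}\mathcal G$; the $V$ and $W$ parts remain and are $O(\|\nabla\Phi_n^+\|_{L^2})$, of the same order as the gap $K_2$ you are after. The paper's Step~2 closes this by inverting the uniformly concave $\partial^2_{\Psi^-\Psi^-}\mathcal G$ (of size $-\gamma c^2/2$ by Lemma~\ref{lemma_concavity}) to obtain $\|h'[\Phi_n^+]\|_c=O(c^{-2}\|\nabla\Phi_n^+\|_{L^2})$, whence the cross terms are $O(c^{-2}\|\nabla\delta_n\|^2)$. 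Relatedly, your displayed Hessian formula omits the additional second-derivative contributions of the quartic interaction (the terms where one density slot is replaced by $\chi_n$); the paper absorbs these by the estimates of \cite{lewin2004}, Theorem~1, Step~2, which produce the density $\rho_n$ with $\int\rho_n=N-1$ rather than a bare $v^*W_{a,\Psi_n}v$.
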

\begin{proof}
  Define
  \begin{align*}
    k(\Psi^{+}, \Psi^{-}) = g(\Psi^{+}+\Psi^{-})
  \end{align*}
  on $\Sigma_{a}^{+} \times (E^{-})^{K}$. From hypothesis
  \textit{(\ref{hypoPS})} and the definition of $h$, $(\Psi_{n}^{+},
  h(a,\Psi_{n}^{+}))$ is a Palais-Smale sequence for $\E(a,k(\cdot))$.

  With the same notations as before, $k(\Psi^{+}, \Psi^{-}) = (1+\frac
  1 c B(\Psi^{-}))(\Psi^{+} + \Psi^{-})$ on $\Sigma_{a}^{+} \times
  (E^{-})^{K}$, so that, for $c$ large enough,
  $k'(\Psi_{n}^{+},h(a,\Psi_{n}^{+}))[\cdot,\cdot]$ is an isomorphism
  from $T_{\Psi_{n}^{+}}\Sigma_{a}^{+} \times (E^{-})^{K}$ to
  $T_{\Psi_{n}}\Sigma$. Therefore, $\Psi_{n} = k(\Psi_{n}^{+},
  h(a,\Psi_{n}^{+}))$ is a Palais-Smale sequence for $\E$ on $\Sigma$,
  and \textit{(\ref{PSreduced:1})} is proved.
  \subsubsection*{Upper bound on the Lagrange multipliers}
  Let us now prove the upper bound \textit{(\ref{PSreduced:2})}. Our
  strategy of proof is to link the multipliers $\Lambda_{n}$ to
  second-order information on $\E$ (Step 1). Then, choosing a
  perturbation of the first orbital far from the system (Steps 2 and
  3), so that it sees an effective charge $Z - (N - 1)$, we get an
  upper bound on $\E''$, and therefore on $(\Lambda_{n})_{11}$ in
  terms of $(\Gamma_{a})_{11}$ (Step 4). Since there is nothing special
  about the choice of the first orbital, we get our upper bound on
  $\Lambda_{n}$ (Step 5).

  \paragraph*{\textbf{Step 1}}
  We first translate the conditions (\ref{hypoPS}) and (\ref{hypoBP})
  on the Borwein-Preiss sequence $\Psi_{n}^{+}$ for the functional $\mathcal F_{a}$
  into a second-order condition on the sequence $\Psi_{n}$ for
  $\mathcal E$.
  
  Let $\delta_n \in E^{+}$ be such that $\lela \delta_{n},
  (\Psi_{n}^{+})_{i}\rira = 0, \; i = 1, \dots, K$. Let $\Psi_{n}^{-} =
  h(a,\Psi_{n}^{+})$, and $G_{n} = \Gram(\Psi_{n}^{+} + \Psi_{n}^{-}) =
  1 + \Gram \Psi_{n}^{-}$. For $\eps$ small enough, define the curve
  on $\Sigma_a^{+}$
  \begin{align*}
    \Psi_{n}^{+}(\eps)= g\lp \Psi_{n}^{+} + \eps G_{n}^{1/2} (\delta,
    0, \dots, 0)\rp.
  \end{align*}

  Define the associated
  \begin{align*}
    \Psi_{n}^{-}(\eps) &= h(a,\Psi_{n}^{+}(\eps)),\\
    G_{n}(\eps) &= \Gram(\Psi_{n}^{+}(\eps) +
    \Psi_{n}^{-}(\eps)) = 1 + \Gram \Psi_{n}^{-}(\eps),\\
    \Psi_{n}(\eps) &=
    G_{n}^{-1/2}(\eps)(\Psi_{n}^{+}(\eps) + \Psi_{n}^{-}(\eps)),
  \end{align*}
  and the infinitesimal increments
  \begin{align*}
    \Phi_{n}^{+} &= \left.\frac{\dif}{\dif
        \eps}\Psi_{n}^{+}(\eps)\right|_{\eps=0},\\
    &=G_{n}^{1/2}(\delta_n,0,\dots,0),\\
    \Phi_{n}^{-} &= \left.\frac{\dif}{\dif
        \eps}\Psi_{n}^{+}(\eps)\right|_{\eps=0},\\
    &= h'(\Psi_{n}^{+})[\Phi_{n}^{+}],\\
    \Phi_{n} &= \left.\frac{\dif}{\dif
        \eps} \Psi_{n}(\eps)\right|_{\eps=0}.
  \end{align*}

  We introduce the Lagrangian
  $L_{n}(\Psi) = \E(a,\Psi) - \lela \Psi, \Lambda_{n}
  \Psi\rira$. Because $\lela \Psi_{n}(\eps), \Lambda_{n}
  \Psi(\eps)\rira = \tr \Lambda_{n}$ does not depend on $\eps$, we can compute
  \begin{align*}
    \left.\frac{\dif^{2}}{\dif
        \eps^{2}} \mathcal F_{a}(\Psi_{n}^{+}(\eps))\right|_{\eps=0} &=     \left.\frac{\dif^{2}}{\dif
        \eps^{2}} L_{n}(\Psi_{n}(\eps))\right|_{\eps=0}\\
    &= \E''(a,
    \Psi_{n})[\Phi_{n}, \Phi_{n}] - \lela \Phi_{n}, \Lambda_{n}
    \Phi_{n} \rira + L_{n}'(\Psi_{n})\left[\left.\frac{\dif^{2}}{\dif
        \eps^{2}}\Psi_{n}(\eps)\right|_{\eps=0}\right].
  \end{align*}

  From condition (\ref{hypoPS}) we know that $L_{n}'(\Psi_{n}) =
  o_{n\to\infty}(1)$ in $H^{-1/2}$ norm, and from condition
  (\ref{hypoBP}) we know that $    \left.\frac{\dif^{2}}{\dif
        \eps^{2}} \mathcal F_{a}(\Psi_{n}^{+}(\eps))\right|_{\eps=0}
    \geq o_{n\to\infty}(1)$, hence
    
  \begin{align}
    \label{res_step1}
    \E''(\Psi_{n}) [\Phi_{n},\Phi_{n}]  \geq \lela \Phi_{n}, \Lambda_{n}
    \Phi_{n}\rira + o_{n\to\infty}(1).
  \end{align}

  \paragraph*{\textbf{Step 2}}
  We seek to choose a $\delta_{n}$ that will give us an upper bound on
  $\E''(\Psi_{n})[\Phi_{n},\Phi_{n}]$.  We first establish an
  intermediate control on $\Phi_{n}^{-}$. First, note that, since $a
  \in S'_{\gamma}$ and $\Psi_{n}^{+} \in \Sigma_{a}^{+}$, the a priori
  estimates of Proposition \ref{prop_h} and Lemma
  \ref{lemma_apriori_psiplus} hold.

  Define
  \begin{align*}
    \mathcal G(\Psi^{+}, \Psi^{-}) &=F_{a,\Psi^{+}}(\Psi^{-}),\\
    &=\E(a,g(\Psi^{+}+\Psi^{-})).
  \end{align*}
  Now, for all $\Psi^{+} \in \Sigma_{a}^{+}$, $\Phi^{-} \in (E^{-})^{K}$,
  \begin{align*}
    \partial_{\Psi^{-}}\mathcal G(\Psi^{+}, h(a,\Psi^{+}))[\Phi^{-}] = 0.
  \end{align*}
  Differentiating with respect to $\Psi^{+}$, we get, for all $\Phi^{+}
  \in T_{\Psi^{+}}\Sigma$,
  \begin{align*}
    \partial^{2}_{\Psi^{-}\Psi^{+}}\mathcal G(\Psi^{+},
    h(a,\Psi^{+}))[\Phi^{-},\Phi^{+}]
    + \partial^{2}_{\Psi^{-}\Psi^{-}}\mathcal G(\Psi^{+},
    h(a,\Psi^{+}))[\Phi^{-}, h'(\Psi^{+})[\Phi^{+}]] = 0,
  \end{align*}
  and therefore, from the definition of $\mathcal G$,
    \begin{align*}
    - F''_{a,\Psi^{+}}(\Psi^{-})[\Phi^{-}, h'(\Psi^{+})[\Phi^{+}]]=
    \partial^{2}_{\Psi^{-}\Psi^{+}}\mathcal G(\Psi^{+},
    h(a,\Psi^{+}))[\Phi^{-},\Phi^{+}].
  \end{align*}

  We now apply this to $\Psi^{+} = \Psi_{n}^{+}$, $\Psi^{-} =
  \Psi_{n}^{-}$, $\Phi^{+} = \Phi_{n}^{+}$ and $\Phi^{-} =
  \Phi_{n}^{-}$, and get
  \begin{align}
    \label{eq_f_g}
    - F''_{a,\Psi_{n}^{+}}(\Psi_{n}^{-})[\Phi_{n}^{-}, \Phi_{n}^{-}] = \partial^{2}_{\Psi^{-}\Psi^{+}}\mathcal G(\Psi_{n}^{+},
    \Psi_{n}^{-})[\Phi_{n}^{-}, \Phi_{n}^{+}].
  \end{align}
  Using the classical Hardy inequality as in \cite{esteban-sere-1999}
  (4.31),
  \begin{align*}
    \partial^{2}_{\Psi^{-}\Psi^{+}}\mathcal G(\Psi_{n}^{+},
    \Psi_{n}^{-})[\Phi_{n}^{-}, \Phi_{n}^{+}] \leq O\left(\,\norm{\nabla
        \Phi_{n}^{+}}_{L^{2}}\norm{\Phi_{n}^{-}}_{L^{2}}\right),
  \end{align*}
  where the notation $O$ is for both $c$ and $n$ large.

  But, by Lemma \ref{lemma_concavity},
  \begin{align*}
    F''_{a,\Psi_{n}^{+}}(\Psi_{n}^{-})[\Phi_{n}^{-}, \Phi_{n}^{-}] \leq -\frac{\gamma c^{2}} 2 \norm{\Phi_{n}^{-}}_{c}^{2},
  \end{align*}
  from which we conclude, from \eqref{eq_f_g}, that
  \begin{align*}
    \frac{\gamma c^{2}} 2 \norm{\Phi^{-}_{n}}_{c}^{2} \leq O\left(\,\norm{\nabla
        \Phi_{n}^{+}}_{L^{2}}\norm{\Phi_{n}^{-}}_{L^{2}}\right),
  \end{align*}
  and therefore, since $\norm{\Phi_{n}^{-}}_{L^{2}} \leq \norm{\Phi_{n}^{-}}_{c}$, that
  \begin{align}
    \label{control_hprim}
    \norm{\Phi^{-}_{n}}_{c} \leq \frac 1{c^{2}} O\lp\,\norm{\nabla
      \Phi_{n}^{+}}_{L^{2}}\rp.
  \end{align}

  \paragraph*{\textbf{Step 3}}
  We now evaluate $\E''(\Psi_{n})[\Phi_{n},\Phi_{n}]$. First, we
  compute
  \begin{align}
    \Phi_{n} &= \left.\frac{\dif}{\dif
        \eps} \Psi_{n}(\eps)\right|_{\eps=0}\notag\\
    &= \left.\frac{\dif}{\dif
        \eps} G_{n}^{-1/2}(\eps)\left(\Psi_{n}^{+}(\eps) +
        \Psi_{n}^{-}(\eps)\right)\right|_{\eps=0}\notag\\
    &= G_{n}^{-1/2} (\Phi_{n}^{+} + \Phi_{n}^{-}) + \left.\frac{\dif}{\dif
        \eps} G_{n}^{-1/2}(\eps)\right|_{\eps=0}(\Psi_{n}^{+} + \Psi_{n}^{-})\notag\\
    &= G_{n}^{-1/2} (\Phi_{n}^{+} + \Phi_{n}^{-}) - \frac 1 2 \Gram
    (\Psi_{n}^{-}, \Phi_{n}^{-})(\Psi_{n}^{+} + \Psi_{n}^{-})\notag\\
    &= (\delta_n,0,\dots,0) + R_{n}^{+} + R_{n}^{-},\label{estimate_phin}
  \end{align}
where
  \begin{align*}
    R_{n}^{+} &= \left.\frac{\dif}{\dif
        \eps} G_{n}^{-1/2}(\eps)\right|_{\eps=0} \Psi_{n}^{+},\\
    R_{n}^{-}&= \left.\frac{\dif}{\dif
        \eps} G_{n}^{-1/2}(\eps)\right|_{\eps=0} \Psi_{n}^{-} +
    G_{n}^{-1/2} \Phi_{n}^{-}.
  \end{align*}
  
  Using \eqref{control_hprim}, we can estimate the remainder terms
  $R_{n}^{\pm} \in E^{\pm}$ as
  \begin{align*}
    \norm{R_{n}^{+}}_{c} &= O\lp \frac 1 {c^{3}} \norm{\nabla
      \delta_n}_{L^{2}}\rp,\\
    \norm{R_{n}^{-}}_{c} &= O\lp \frac 1 {c^{2}} \norm{\nabla \delta_n}_{L^{2}}\rp.
  \end{align*}

  Using these estimates and the same arguments as in the Step 2 of the
  proof of Theorem 1 in \cite{lewin2004}, we can now compute
  \begin{align*}
    \E''(\Psi_{n}) [\Phi_{n},\Phi_{n}] &\leq \Gamma_{11} \lp c^{2} \norm{\delta_n}_{c}^{2} +
    \lela \delta_n, (V+\rho_{n} \star \frac 1 {\abs x}) \delta_n\rira \rp\\
    &+ O\left(\frac 1 {c^{2}}
      \norm{\nabla \delta_n}_{L^{2}}^{2} + \frac 1 {c^{3}} \norm{\nabla
        \delta_n}_{L^{2}} \norm{\delta_n}_{c}\right),
  \end{align*}
  with $\int\rho_{n} = N-1$.

  Now, let $U$ be an arbitrary vector subspace of $H^{1}(\R^{3},
  \C^{4})$ consisting of functions of the form
  \begin{align*}
    \mat{f(\,\abs x)\\0\\0\\0}
  \end{align*}
  with dimension at least $K+1$. Let $U^{+}_{\lambda}$ be the positive
  projection of the dilation of $U$ of a factor $\lambda$, \ie{}
  \begin{align*}
    U_{\lambda} = P^{+}\left\{\psi\lp\frac x \lambda\rp, \psi \in U\right\}.
  \end{align*}

  $U^{+}_{\lambda}$ is also of dimension $K+1$ for $c$ large enough,
  so we can find a function $\delta_n \in U^{+}_{\lambda}$ normalized
  in $c$ norm which is orthogonal to $\Psi_{n}^{+}$ in $L^{2}$. For
  such a function, following the estimates in Lemma 4.5 of
  \cite{esteban-sere-1999},
  \begin{align*}
    \E''(\Psi_{n}) [\Phi_{n},\Phi_{n}] &\leq \Gamma_{11} \lp c^{2} - \eta \frac {Z - (N -
      1)}{\lambda} + O\lp\frac {1} {\lambda^{2} c^{2}} + \frac {1}{\lambda c^{3}}\rp\rp,
  \end{align*}
  with $\eta > 0$, where the $O$ notation is understood for $n, c$ and
  $\lambda$ large. So, taking $\lambda$ large enough independently of
  $n$ and $c$, we get
  \begin{align}
    \label{upperbound_Eprimprim}
    \E''(\Psi_{n}) [\Phi_{n},\Phi_{n}] &\leq \lp c^{2} - \kappa\rp\Gamma_{11},
  \end{align}
  with $\kappa > 0$ independent of $n$ and $c$.

  \paragraph*{\textbf{Step 4}} We now evaluate $\lela \Phi_{n},
  \Lambda_{n} \Phi_{n}\rira$.  Using the expression
  \eqref{estimate_phin} of $\Phi_{n}$ again, we estimate
  \begin{align}
    \label{phinlambdanphin}
    \lela \Phi_{n}, \Lambda_{n} \Phi_{n}\rira &= (\Lambda_{n})_{11} + O\lp\frac 1 {c^{4}} \Lambda_{n}\rp.
  \end{align}
  But we can obtain a very crude control on $\Lambda_{n}$ thanks to the
  estimates in Lemmas \ref{lemma_apriori_psiplus} and
  \ref{lemma_apriori_psiminus}:
  \begin{align*}
    (\Lambda_{n})_{ij} &= \lela \Psi_{n}^{i}, H_{a,\Psi_{n}}
    \Psi_{n}^{j}\rira + o_{n\to\infty}(1),\\
    &= c^{2} \Gamma_{ij} + O(\,\norm{\Psi_{n}}_{E}^{2}) +
    o_{n\to\infty}(1),
  \end{align*}
  and therefore
  \begin{align*}
    \Lambda_{n} &= c^{2}\Gamma + O(1),
  \end{align*}
  so that, with \eqref{phinlambdanphin},
  \begin{align}
    \label{estimate_lambdan}
    \lela \Phi_{n}, \Lambda_{n} \Phi_{n}\rira &= (\Lambda_{n})_{11} + o_{n\to\infty}(1).
  \end{align}

  \paragraph*{\textbf{Step 5}}
  Plugging \eqref{upperbound_Eprimprim} and \eqref{estimate_lambdan}
  into \eqref{res_step1}, we obtain for $c$ and $n$ large enough
  \begin{align*}
    (\Lambda_{n})_{11} \leq (c^{2} - K_{2})\Gamma_{11},
  \end{align*}
  with $K_{2} > 0$.

  Using the group action \eqref{groupaction}, we could apply the same
  procedure to $(\widetilde a, \widetilde{\Psi}_{n}^{+}) = U \cdot
  (a,\Psi_{n}^{+})$ for any $U \in \mathcal U(K)$, and obtain
  \begin{align*}
    (U\Lambda_{n}U^{*})_{11} \leq (c^{2} - K_{2})(U\Gamma U^{*})_{11},
  \end{align*}
  which proves our result
  \begin{align*}
    \Lambda_{n} \leq (c^{2} - K_{2})\Gamma.
  \end{align*}

  \subsubsection*{Lower bound on the Lagrange multipliers}

  Let $A_{n} = (c^{2}-K_{2})\Gamma - \Lambda_{n}$. We know that, for
  $n$ large enough, $A_{n} \geq 0$, and, from
  \eqref{estimate_lambdan},
  \begin{align*}
    \tr A_{n} &= Nc^{2}-NK_{2}-\tr \Lambda_{n},\\
    &= O(1).
  \end{align*}
  So $A_{n} = O(1)$, and therefore $\Lambda_{n} \geq
  (c^{2}-K_{2})\Gamma - O(1)$. Because $\Gamma \geq \gamma > 0$, the
  result follows for $c$ large.
\end{proof}
\subsection{Proof of Theorem \ref{sol_gamma}}
For any $a \in S'_{\gamma}$, we can apply the Borwein-Preiss
variational principle \cite{borwein1987smooth} to the functional
$\F_{a}$ on $\Sigma_a^{+}$, and obtain a sequence $\Psi_{n}^{+}$ that
satisfies the hypotheses of Proposition
\ref{lemma_secondorderimplybound}. The associated sequence $\Psi_{n}$
satisfies the hypotheses of Proposition \ref{ps} so, the sequence
$(a,\Psi_{n}^{+})$ converges up to extraction to a limit
$\Psi_{a}^{+}$, solution of the min-max principle
\begin{align*}
  \F_{a}(\Psi_{a}^{+}) &= \min_{\Psi^{+} \in \Sigma^{+}} \max_{\Psi^{-} \in (E^{-})^{K}}
  \E(a,g(\Psi^{+}+\Psi^{-})).
\end{align*}
We now take a minimizing sequence $a_{n}$ for the continuous
functional $F_{a}(\Psi_{a}^{+})$ on $S'_{\gamma}$. The sequence
$(a_{n},\Psi_{a_{n}}^{+})$ again verifies the hypotheses of Proposition
\ref{ps}, and therefore converges to $(a_{*},\Psi^{+}_{*})$. The
triplet $(a_{*},\Psi^{+}_{*}, h(a_{*},\Psi_{*}^{+}))$ is now a solution of
the variational principle \eqref{multidfvarprinc}, and Theorem
\ref{sol_gamma} is proved.
\section{Proof of Theorem \ref{thm_cv_min}}
\label{sec:nonrel_limit}
\subsection{Nonrelativistic limit}
We begin with a proposition that is the multiconfiguration analogue of
Theorem 3 of \cite{esteban2001}.
\begin{proposition}[Nonrelativistic limit of solutions]
  \label{lem_cv_sols}
  Let $c_{n} \to \infty, (a_{n},\Psi_{n}) \in S_{\gamma}
  \times \Sigma$ solutions of
  \begin{align}
    \label{sol_df_n}
    H_{a_{n},\Psi_{n}} \Psi_{n} = \Lambda_{n} \Psi_{n}
  \end{align}
  such that
  \begin{align*}
    (c_n^2-K_{1}) \Gamma_{n} \leq \Lambda_{n} \leq (c_n^2 - K_{2})
    \Gamma_{n}
  \end{align*}
  for constants $K_{1}, K_{2} > 0$.

  Then, up to a subsequence, $a_{n} \to a \in S_{\gamma}$, $\Psi_{n}
  \to \mat{\Phi\\0}$ in $H^{1}$, and $\E(a_{n},\Psi_{n}) - Nc^{2} \to
  \EHF(a,\Phi)$.

\end{proposition}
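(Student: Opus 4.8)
The plan is to follow the strategy of Theorem 3 of \cite{esteban2001}, adapted to the multiconfiguration setting with the matrix formalism of \cite{lewin2004}. The starting point is the uniform $H^{1/2}$ bound: testing \eqref{sol_df_n} against $\Psi_n^{+} - \Psi_n^{-}$ and using the lower bound $\Lambda_n \geq (c_n^2 - K_1)\Gamma_n$ together with the Hardy-type inequality \eqref{katoineq} and $\Gamma_n \geq \gamma$, exactly as in the proof of Proposition \ref{ps}, gives $\norm{\Psi_n}_{c_n} = O(1)$, i.e. a uniform bound on $\norm{\nabla \Psi_n}_{L^2}^2 / c_n^2$ and on $\norm{\Psi_n}_{L^2}$. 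The first real task is to upgrade this to a uniform $H^1$ bound. For this I would decompose $\Psi_n = \Psi_n^{+} + \Psi_n^{-}$ and show that the negative part is small: testing the equation against $\Psi_n^{-}$ and using the upper bound $\Lambda_n \leq (c_n^2-K_2)\Gamma_n$ to beat the $-c_n^2\Gamma_n$ coming from $D_{c_n}$ on $E^{-}$, one gets $\norm{\Psi_n^{-}}_{L^2} = O(1/c_n)$ and, with a bit more work, $\norm{\nabla \Psi_n^{-}}_{L^2} = O(1)$. Then, writing $H_{a_n,\Psi_n}\Psi_n^{+} = \Lambda_n\Psi_n^{+} + (\text{terms involving }\Psi_n^{-}\text{ and }\Delta_n)$ and using $(D_{c_n} - c_n^2)\Psi_n^{+} = O(\sqrt{-\Delta})$ in the Fourier domain (the Cauchy–Schwarz splitting of $\sqrt{c^4+c^2\xi^2}$ already used in Lemma \ref{lemma_apriori_psiplus}) together with $\Lambda_n - c_n^2\Gamma_n = O(1)$, one obtains $\norm{\nabla \Psi_n^{+}}_{L^2} = O(1)$, hence $\norm{\Psi_n}_{H^1} = O(1)$.

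Next I would extract a weakly convergent subsequence: $a_n \to a$ in the finite-dimensional sphere $S_\gamma$ (so $\Gamma_{a_n} \to \Gamma_a \geq \gamma$), $\Lambda_n - c_n^2\Gamma_n \to \bar\Lambda$ for some Hermitian matrix with $-K_1\Gamma_a \leq \bar\Lambda \leq -K_2\Gamma_a$, and $\Psi_n \rightharpoonup \Psi_\infty$ weakly in $H^1$, strongly in $L^p_{\mathrm{loc}}$. Since $\norm{\Psi_n^{-}}_{L^2} \to 0$ and $\norm{\nabla\Psi_n^{-}}_{L^2}$ is bounded, the limit has the form $\Psi_\infty = \mat{\Phi\\0}$; passing to the limit in the weak formulation of \eqref{sol_df_n} — here the crucial computation is the nonrelativistic limit of the operator, $(D_{c_n} - c_n^2)\Gamma_n \to -\frac12\Delta\,\Gamma_a$ in the sense of quadratic forms on $H^1$, which follows from $\sqrt{c^4-c^2\Delta} - c^2 \to -\frac12\Delta$, while $V\Gamma_n + 2W_{a_n,\Psi_n} \to V\Gamma_a + 2W_{a,\Phi}$ — shows that $(a,\Phi)$ solves the multiconfiguration Hartree–Fock equations with multiplier matrix $\bar\Lambda$, i.e. $(-\frac12\Delta\,\Gamma_a + V\Gamma_a + 2W_{a,\Phi})\Phi = \bar\Lambda\Phi$.

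The main obstacle, as always in these limit arguments, is promoting weak convergence to strong $H^1$ convergence, i.e. ruling out loss of mass at infinity — this is exactly where the hypothesis $N < Z+1$ enters. The sign conditions $-K_1\Gamma_a \leq \bar\Lambda \leq -K_2\Gamma_a$ (with $K_2 > 0$) guarantee that all the Hartree–Fock Lagrange multipliers are strictly negative, hence below the bottom $\frac{(N-1)-Z}{|x|}$ of the effective essential spectrum felt by a piece of orbital escaping to infinity; combined with a concentration-compactness / Lieb–Simon-type argument on the error $\eps_n = \Psi_n - \Psi_\infty$, exactly as in \cite{lewin2004, esteban2001}, one concludes $\eps_n \to 0$ in $H^1$. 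A clean way to organize this is to test the equation for $\eps_n$ against $\eps_n$ itself, use the strict negativity of $\bar\Lambda$ and the compactness of the potential and interaction terms on $\eps_n$ to show $\norm{\nabla\eps_n}_{L^2}^2 + \norm{\eps_n}_{L^2}^2 \to 0$. Once strong $H^1$ convergence holds, the energy convergence $\E(a_n,\Psi_n) - Nc_n^2 = \lela \Psi_n, (D_{c_n}-c_n^2)\Gamma_n\Psi_n\rira + \lela\Psi_n,(V\Gamma_n + W_{a_n,\Psi_n})\Psi_n\rira \to \EHF(a,\Phi)$ is immediate from the form convergence $(D_{c_n}-c_n^2) \to -\frac12\Delta$ and continuity of the remaining terms in $H^1$ norm.
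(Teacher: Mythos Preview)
Your overall strategy (uniform $H^{1}$ bound, extract a subsequence, pass to the limit in the equation, upgrade to strong convergence via \cite{lewin2004}) matches the paper's, but the central step---passing to the limit in the Dirac equation---has a genuine gap. You claim that $(D_{c_n}-c_n^{2})\Gamma_n \to -\tfrac12\Delta\,\Gamma_a$ as quadratic forms on $H^{1}$, justified by $\sqrt{c^{4}-c^{2}\Delta}-c^{2}\to -\tfrac12\Delta$. But $D_{c}$ is not the scalar operator $\sqrt{c^{4}-c^{2}\Delta}$: on a purely upper-component spinor $(\phi,0)^{T}$ the quadratic form $\lela(\phi,0),(D_{c}-c^{2})(\phi,0)\rira$ vanishes identically, so no $-\tfrac12\Delta$ can appear this way. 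The Schr\"odinger kinetic term in the limit does not come from a form limit of $D_{c}-c^{2}$; it comes from the \emph{lower} spinor component. Relatedly, you conflate the spectral projections $P^{\pm}$ (which depend on $c$) with the upper/lower spinor split; they agree only in the limit $c\to\infty$, and for the equation one needs the latter.

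The paper proceeds instead by writing $\Psi_{n}=\mat{\Phi_{n}\\\mathcal X_{n}}$ and using the block structure of $D_{c}$ with $L=-i\sigma\cdot\nabla$. The lower row of \eqref{sol_df_n} gives the kinetic-balance relation $\mathcal X_{n}=\tfrac{1}{2c_{n}}L\Phi_{n}+O(c_{n}^{-3})$ in $L^{2}$; substituting into the upper row and using $L^{2}=-\Delta$ produces exactly $-\tfrac12\Delta\,\Gamma_{n}\Phi_{n}$, so $(a_{n},g(\Phi_{n}))$ becomes a Palais--Smale sequence for $\EHF$ with multipliers $\Lambda_{n}-c_{n}^{2}\Gamma_{n}<0$, to which the compactness result of \cite{lewin2004} applies. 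The same relation is what makes the energy computation work: the $O(1)$ contributions from the lower component (of order $c_{n}^{2}\norm{\mathcal X_{n}}_{L^{2}}^{2}$ and $c_{n}\lela\Phi_{n},L\mathcal X_{n}\rira$) combine to give precisely $\tfrac12\lela\Phi_{n},(-\Delta)\Gamma_{n}\Phi_{n}\rira$, not zero. Your energy argument, which treats these terms as negligible via ``form convergence'', would miss this. A minor further point: the paper obtains the $H^{1}$ bound in one line by computing $\norm{D_{c_{n}}\Gamma_{n}\Psi_{n}}_{L^{2}}^{2}$ two ways and using the classical Hardy inequality, which is considerably more direct than your route through $P^{\pm}$.
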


\begin{proof}
  First, we need a uniform bound on $\Psi_{n}$ in $H^{1}$.
  \begin{align*}
    \norm {D_{c} \Gamma_{n} \Psi_{n}}_{L^{2}}^{2} &= \lela \Gamma_{n}
    \Psi_{n}, (c^{4} - c^{2} \Delta)\Gamma_{n}\Psi_{n}\rira,\\
    &=c_n^4 \norm{\Gamma_{n}\Psi_{n}}_{L^{2}}^{2} + c_n^2
    \norm{\Gamma_{n} \nabla
      \Psi_{n}}_{L^{2}}^{2}.
  \end{align*}

  On the other hand,
  \begin{align*}
    \norm {D_{c} \Gamma_{n} \Psi_{n}}_{L^{2}}^{2} &= \norm{(V \Gamma +
      2W_{a_n,\Psi_{n}}) \Psi_{n} - \Lambda_{n}
      \Psi_{n}}_{L^{2}}^{2},\\
    &\leq c_n^4 \norm{\Gamma_{n}\Psi_{n}}_{L^{2}}^{2} + C
    \norm{\nabla \Psi_{n}}_{L^{2}}^{2} + C c_n^2 \norm{\Gamma_{n}
      \nabla \Psi_{n}}_{L^{2}}.
  \end{align*}
  by the classical Hardy inequality, with $C>0$. Therefore, $\Psi_{n}$
  is bounded in $H^{1}$.

  We now write $\Psi_{n} = \mat{\Phi_{n}\\\mathcal X_{n}}$, where
  $\Phi_{n},\mathcal X_{n} \in H^{1}(\R^{3},\C^{2})$. We rewrite the
  equations \eqref{sol_df_n} as
  \begin{align}
    \label{phix1}
    {c_n} \Gamma_n L \mathcal X_{n} + (V\Gamma_n+2W_{a_n,\Psi_{n}}) \Phi_{n} =
    (\Lambda_{n} - c_n^2 \Gamma_{n}) \Phi_{n},\\
    \label{phix2}
    {c_n} \Gamma_n L \Phi_{n} + (V\Gamma_n+2W_{a_n,\Psi_{n}}) \mathcal X_{n} =
    (\Lambda_{n} + c_n^2 \Gamma_{n}) \mathcal X_{n},
  \end{align}
  with the operator
  \begin{align}
    L = -i \nabla \cdot \sigma.
  \end{align}

  Because $\Lambda_{n} < (c_n^2-K_{2}) \Gamma_{n}$, using
  the Hardy inequality and the boundedness of $\Phi_{n}$ in $H^{1}$,
  the first equation \eqref{phix1} yields
  \begin{align}
    \norm{\Gamma_n L \mathcal X_{n}}_{L^{2}} = \norm{\Gamma_n \nabla \mathcal
      X_{n}}_{L^{2}} = O(1/{c_n}).
  \end{align}

  The second equation \eqref{phix2} gives
  \begin{align}
    \notag
    \mathcal X_{n} &= \frac 1 {2c} \left(\frac 1 2 \left(\Gamma_{n} +
        \Lambda_{n}/c_n^2\right)\right)^{-1} \Gamma_{n} L \Phi_{n} + \frac 1 {c_n^2}
    O(\,\norm{\mathcal X_{n}}_{H^{1}})\\
    \label{approxkb}
    &= \frac 1 {2c_{n}} L \Phi_{n} + \frac 1 {c_n^2}
    O(\,\norm{\mathcal X_{n}}_{H^{1}}) + O\left(\frac 1 {c_n^3}\right)
  \end{align}
  in $L^{2}$ norm.

  Equation \eqref{approxkb} gives $\norm{\mathcal X_{n}}_{L^{2}} =
  \frac 1 {2c_{n}} \norm{L \Phi_{n}}_{L^{2}} + O(1/c_n^2)= O(1/{c_n})$, and then
  \begin{align}
    \label{approxkb_cthree}
    \mathcal X_{n} = \frac 1 {2c_{n}} L \Phi_{n} + O\left(\frac 1 {c_n^3}\right)
  \end{align}
  again in $L^{2}$ norm.

  Inserting this into the first equation \eqref{phix1} and using the
  identity $L^{2} = -\Delta$, we get the equation for $\Phi_{n}$:
  \begin{align*}
    \lp-\frac 1 2 \Delta \Gamma_{n} + V\Gamma_{n} + 2W_{\Phi_{n}}\rp \Phi_{n}
    &= (\Lambda_{n} - c_n^2 \Gamma_{n}) \Phi_{n} + \Delta_{n}\\
    \Gram \Phi_{n} &= 1 + o(1)
  \end{align*}
  with $\Delta_{n} \to 0$ in $L^{2}$ and therefore $H^{-1}$
  norm. $(a_{n},g(\Phi_{n}))$ is a Palais-Smale sequence for the
  nonrelativistic functional, with control on the Lagrange multipliers
  $(\Lambda_{n} - c_n^2 \Gamma_{n}) < 0$ and non-degeneracy
  information $\Gamma_{n} \geq \gamma$. By the arguments in the proof
  of Theorem 1, step 3 of \cite{lewin2004}, $(a_{n},\Phi_{n})$
  converges, up to a subsequence, to $(a,\Phi)$ in $H^{1}$ norm, and
  it is easy to compute from \eqref{approxkb_cthree} that
  \begin{align*}
    \lela \Psi_{n}, D_{{c_n}} \Gamma_{n} \Psi_{n}\rira = N c_n^2 +
    \frac 1 2 \lela \Phi_{n}, (-\Delta) \Gamma_{n}\Phi_{n}\rira + o(1),
  \end{align*}
  and the result follows.
\end{proof}

We are now ready to prove Theorem \ref{thm_cv_min}.

\subsection{Proof of Theorem \ref{thm_cv_min}}
\begin{proof}
  The sequence $(a_{n},\Psi_{n})$ satisfies the hypotheses of Proposition
  \ref{lem_cv_sols} : up to a subsequence, it converges strongly in
  $H^{1}$ to $\lp a,\mat{\Phi\\0}\rp$, with $\lim \E(a_{n},\Psi_{n}) -
  N c_n^2 = \EHF(a,\Phi)$. But since by Proposition \ref{asymp_energy}
  we have
  \begin{align*}
    \E(a_{n},\Psi_{n}) = I_{c_{n},\gamma} \leq Nc_{n}^{2} + I^{K} +
    o_{c_{n}\to\infty}(1),
  \end{align*}
  we obtain $\EHF(a,\Phi) = I^{K}$, hence the result.
\end{proof}

\section*{Acknowledgements}
I would like to thank Éric Séré for his attention and helpful advice.

\bibliographystyle{alpha} \bibliography{refs}
\end{document}